\documentclass[11pt]{article}

\usepackage[margin=1in]{geometry}
\usepackage{setspace}
\usepackage{amsmath,amssymb}
\usepackage{amsthm}
\usepackage{graphicx}
\usepackage{xcolor}
\usepackage{booktabs}
\usepackage{float}
\usepackage{caption}
\usepackage{hyperref}
\hypersetup{
  colorlinks=true,
  linkcolor=black,
  citecolor=black,
  urlcolor=black,
  pdftitle={A Gate-and-Menu Theory of Collective Tourism Brand Value},
  pdfauthor={Johan Fourie},
  pdfkeywords={collective reputation; tourism brand; destination governance; weaker-link public good}
}

\usepackage[authoryear,round]{natbib}

\theoremstyle{plain}
\newtheorem{proposition}{Proposition}

\newtheorem{lemma}{Lemma}
\theoremstyle{definition}

\title{A Gate-and-Menu Theory of Collective Tourism Brand
Value\thanks{This paper began with a reader's question on my blog, ourlongwalk.com,
for which I am grateful. Statement on the use of artificial intelligence: large
language models were used throughout this work, in two roles. As measurement
instruments, Anthropic's Claude (Opus) and OpenAI's Codex (gpt-5.5) code the
paper's constructs from public text under a fixed rubric
(Sections~\ref{sec:measurement}--\ref{sec:results}); every prompt, setting and
output is retained in the replication package. As research assistance -- through
Claude Code and the Codex command-line interface, alongside smaller tools such as
refine.ink -- they drafted and edited prose under my direction, wrote and checked
the analysis code, adversarially verified the proofs and checked bibliographic
records. All modeling choices, interpretations and conclusions are mine, as are
all errors.}}
\author{Johan Fourie\thanks{Department of Economics, Stellenbosch
University. Email: johanf@sun.ac.za.}}
\date{}

\begin{document}
\maketitle

\begin{abstract}
\noindent A collective tourism brand is jointly produced by assets managed by
different custodians, so destination managers must decide which assets require
collective protection. We model a destination as a \emph{gate} times a
\emph{menu}, $V=G\cdot A$. The gate combines essential inputs and
non-reproducible anchors as a weaker-link public good; the menu combines many
reproducible attractions through love of variety. Deterioration of a binding gate
element has a non-vanishing, brand-wide effect, whereas an individual menu
attraction's share vanishes as the menu grows. Aumann--Shapley accounting allocates
the jointly produced brand value across assets. Under stated surplus and non-use-value
conditions, a decentralized custodian weakly under-maintains physical gate quality
holding other qualities fixed. We measure the framework's constructs from guide text
with two language models across 166 economies, subject the gate labels to a partial
external check against the World Heritage List, and illustrate a within-brand
contamination ordering with four French wine regions.
\end{abstract}
\vspace{1em}\noindent\textbf{Keywords:} collective reputation; tourism brand;
destination management; destination governance; weaker-link public good; text as data

\vspace{0.25em}
\noindent\textbf{JEL codes:} Z32; H23; L83; H41; R12

\vspace{0.5em}

\newpage

\section{Introduction}\label{sec:intro}

Destination managers promote a shared name but rarely control all the assets that
produce its value. A national tourism brand is the primary example. A country's
regions trade under one name. A visitor first decides whether to come to the
country and only then how to divide a trip among its regions, so the value of
every member's asset depends on inputs -- safety, infrastructure, prominent
anchors -- managed by separate public and private custodians.

Collective brands in other markets show the consequences. A shared name can
raise every member's return where members are relatively homogeneous, and protect
no one where a heterogeneous set of members dilutes the signal
\citep{schatt2025collective, winfreemccluskey2005}. Existing tourism research does
not yet offer an integrated formal account of when a destination umbrella protects
its members, which assets destination managers should coordinate around, and what
an individual attraction is worth inside the umbrella.

This paper asks two questions. First, when does a shared destination brand
protect its members, and when does it cease to? Second, what is an individual
attraction worth within the brand? The questions matter because tourism is, for
many countries, a major export sold under a shared reputation, and because many of
the assets that constitute it -- a prominent mountain, a major park, a functioning
gateway city -- are controlled by decentralized custodians that do not capture
their cross-regional effects. The two questions are linked: a theory of when the
shared brand protects members also identifies which assets require collective
management.

One approach is to treat the brand as a complementary bundle and to ask
how complementary it is. This approach is incomplete. If every attraction enters
one aggregator symmetrically, then a
minor museum and an iconic landmark are scaled by the same logic, and a model with
enough complementarity will let a small attraction matter as much as a large one.
That is implausible. The distinction is that a tourism brand is built from
two kinds of asset that behave differently. The first kind are
\textit{anchors} and \textit{essential inputs}: a small number of trip-generating
icons -- a celebrated monument, a major national park, a famous coastline --
together with the conditions that make a trip feasible at all (safety, electricity,
functioning airports and borders). These are scarce, hard or impossible to
reproduce, and they enter the weaker-link side of the structure set out below. If safety deteriorates sharply or an anchor is ruined, the value of the
whole destination falls, for every region, not just the one affected.
The second kind are \textit{modular attractions}: a large set of reproducible
goods -- museums, minor parks, wine estates, festivals -- each of which occupies
visitor time after arrival. These enter a love-of-variety menu. Add one and the
destination is marginally more valuable; in a large regular menu, removing or
degrading one changes the destination only through its small menu share.

We formalize this as a destination value function that is the product of a
\textit{gate} and a \textit{menu},
\[
  V = G(g) \cdot A(a),
\]
where the gate $G$ combines the essential inputs and anchors $g$ through a
\textit{weaker-link} aggregator, and the menu $A$ combines the modular attractions
$a$ through a love-of-variety CES. In a weaker-link aggregator, lower-quality
elements receive greater weight. Degrading a gate element therefore raises its own
value share. The
weaker-link form \citep{hirshleifer1983weakest, cornes1993dyke, corneshartley2007}
spans the cases used in this debate: at one extreme quality
simply multiplies (the Cobb-Douglas, or ``O-ring'', case of
\citet{kremer1993oring}); at the other the brand
is determined by its lowest-quality anchor (the Leontief case). The contribution of the gate-and-menu form is that the
asymmetry between the two kinds of asset is now structural. A shock to a gate
element affects the marginal value of all members and remains first-order for a
binding gate element; a change to an individual menu attraction reaches participation through a
share that vanishes as a regular menu grows. No tuning of a
single elasticity is required, and a minor museum does not dominate a large regular
menu, because it is in the menu and not in the gate. Thus the
O-ring logic applies to the gate, not to the whole
brand.

The model also treats the gate as a reputation.
The reason a visitor weighs the whole bundle when deciding whether to come is that
the gate is largely a shared belief about the country: whether it is safe, whether
it functions and whether its icons are intact. We represent that belief with a
simple reputation formula following \citet{tirole1996collective}: it turns a
weaker-link summary of how the country's anchors are run into an umbrella
reputation. It is not a full model of individual reputational histories, but it
keeps the reputation and the physical gate governed
by the same weakest-member logic, and it builds in two features of collective
reputations that the quality-aggregation model lacks. First, the umbrella is
concave by construction: when the brand's reputation is high, a marginal
improvement in aggregate conduct $Q$ has a small effect, but an anchor allowed to
deteriorate, or a gateway
city that becomes unsafe, damages the brand for everyone. Second, the gate is a
\textit{commons}. Each custodian captures only
its own region's tourism income and bears the full cost of maintaining physical
quality, so under the conditions derived below the shared physical-gate benefit is
weakly under-supplied, holding other qualities fixed. In the Leontief limit, the brand
is constrained by the lowest-quality gate element.

This structure makes the shared brand's asymmetry precise: a homogeneous, well-maintained gate
protects every member, while a member allowed to deteriorate damages the concave
umbrella disproportionately. The closest published pattern of this kind comes
from collective wine brands, where the umbrella of a homogeneous region raises
every member's return and the premium of the most heterogeneous region is
negative \citep{schatt2025collective}. Section~\ref{sec:results} uses those
estimates as a within-country illustration of the paper's measurement, not the
main object of analysis. The imposed kernel also reproduces a concavity documented in that
literature -- reputation premiums shift from collective to individual as quality
rises \citep{costanigro2010nested}.

The analysis makes four theoretical contributions and one measurement contribution;
the first two and the fourth answer the protection question, the third answers
the valuation question.
(i) It splits a tourism brand into a weaker-link gate of anchors and essential
inputs and a love-of-variety menu of modular attractions, and it grounds that split in a
basic property of demand: gate assets act directly on whether a visitor comes to the
country, while menu assets primarily affect how a visitor allocates time once there
and reach participation through the menu's inclusive-value share
(Sections~\ref{sec:model}--\ref{sec:asymmetry}). The result is a structural
asymmetry in how shocks propagate. (ii) It represents the gate with a reputation formula
built on the same weakest-member logic; reputation and physical quality can co-vary
when conduct and quality rank together, but the governance theorem conditions on
conduct (Section~\ref{sec:reputation}). (iii) It values assets with Aumann--Shapley
accounting -- a standard method for splitting the value of a jointly produced good
among the inputs that produce it -- and shows that a binding gate anchor earns a
non-vanishing share of brand value while any one menu attraction earns only a
small share, one that shrinks as the menu grows (Section~\ref{sec:valuation}).
(iv) It identifies conditions under which a gate custodian, holding other qualities
fixed, weakly under-maintains its asset, measures the resulting welfare gap and
frames the policy response as a choice
among imperfect coordination arrangements rather than an appeal to a benevolent
planner (Section~\ref{sec:commons}). (v) Following \citet{ludwig2025llm}, it uses
language models as measurement instruments to sort attractions into gate and menu and
to score how homogeneous a country's brand is from public guide text, reporting
cross-model agreement for 166 economies selected by a fixed sample rule and
subjecting the gate labels to a partial external check against the UNESCO World
Heritage List; because the contamination ordering is within-brand, it is illustrated
on French wine regions
(Sections~\ref{sec:measurement}--\ref{sec:results}). Together these results give
destination managers a sequence for brand stewardship: classify assets by their
role in participation, diagnose the binding gate elements and coordinate
maintenance where benefits cross custodial boundaries. They also give researchers
a way to identify which assets determine a shared tourism brand and what each
asset is worth within it. Proofs are in the appendix.

\section{Related literature}\label{sec:lit}

The paper relates to several literatures.

\paragraph{Aggregation technologies for public goods.} The idea that the social
output of many contributions need not be their sum is due to
\citet{hirshleifer1983weakest}, who distinguished summation, best-shot (the
maximum) and weakest-link (the minimum) technologies. \citet{cornes1993dyke} and
\citet{corneshartley2007} generalized this to a continuum of CES weaker-link
technologies between summation and the Leontief limit, in which below-average
contributions disproportionately depress the total without the strict lexicographic
rule of pure weakest-link. The gate uses the complementary sub-range of this family
-- from the Cobb-Douglas (O-ring) case down to the Leontief limit, that is
$\eta < 0$, not the summation end. It lets lower-quality anchors matter more without
letting any matter infinitely. And it applies only to the essential subset of
assets, so the large set of menu attractions never governs the brand.

\paragraph{Collective reputation.} \citet{tirole1996collective} models group
reputation as an aggregate of members' individual reputations under noisy
monitoring, generating multiple steady states, history dependence and an
asymmetry between losing reputation and rebuilding it. The present paper uses that
logic to discipline a reduced-form reputational kernel rather than solving the
full dynamic posterior problem. The static quality-reputation tradition
\citep{shapiro1983premiums} and the collective and common-property treatments
\citep{winfreemccluskey2005, nockestrausz2023} supply the free-riding logic: where
members cannot be distinguished under a shared brand, each chooses quality below the
group optimum and the reputation is depleted. In industrial organization the same
shared name is an \textit{umbrella brand}. It can serve as a signal backed by a
bond \citep{wernerfelt1988umbrella}, use a good's reputation in one market to
signal quality in another \citep{choi1998brand}, mitigate moral hazard and speed
the building of a reputation when its members are aligned
\citep{cabral2009umbrella}. Yet a single low-quality member can dilute it for the
others \citep{bacchiega2024dilution}. In a single-owner brand that
dilution may even be a deliberate choice to monetize the name; under a collective
brand no single owner makes or prices that trade-off. The quality spillover we
study is therefore an uninternalized externality rather than an optimization.
Empirically, collective reputation premiums are concave and diminish toward the
individual as quality rises \citep{costanigro2010nested, castriotadelmastro2015,
gergaud2017netbenefits}, and the closest tourism-relevant evidence
\citep{schatt2025collective} shows the umbrella protecting homogeneous members and
losing value under heterogeneous ones. \citet{nie2024collective} models a single
low-quality operator degrading a Chinese destination's shared rating. The same
mechanism underlies the trade literature on national reputation as an export asset,
where a country's reputation for quality affects the demand facing each of its
exporters and one firm's conduct affects the national label
\citep{chisik2003reputational}. We add to this
a complementary-bundle structure -- the gate and menu -- and an explicit
cross-regional externality, and we model the common-pool externality implied by
these mechanisms.

\paragraph{Anchors, keystones and the destination as a bundle.} A few
salient sites affect how visitors perceive and move through a place. In retail
and agglomeration this is the anchor-tenant logic, where a large anchor store
generates the visitor flow that
co-located, even substitute, tenants share, and the landlord internalizes that
externality by subsidizing the anchor's rent
\citep{konishisandfort2003, gouldpashigian2005}. The distinction between a
destination's core attractions and its supporting attributes is itself long
established in tourism economics \citep{crouchritchie1999}; the gate-and-menu form
maps it into the multiplicative-versus-additive split that generates the
asymmetry. \citet{candelafigini2010, candelafigini2012} develop
the tourism destination as an economic district supplying a bundle of complementary
local goods. The menu side preserves the standard Dixit--Stiglitz gain from
variety \citep{dixitstiglitz1977}: visitors value the range of goods at a
destination. Their Coordination theorem shows that, absent coordination among the
independent firms supplying the complementary bundle, the bundle is mispriced and
underprovided, a complementary-monopoly problem. Our gate-and-menu form gives this
bundle an internal structure and an asymmetry, and the coordination failure
reappears as under-maintenance of the gate.

\paragraph{Superstars and non-reproducible assets.} That a small difference in
quality can map into a large difference in returns when output is jointly consumed
is \citeauthor{rosen1981superstars}'s \citeyearpar{rosen1981superstars} economics
of superstars; the scarcity rents of non-reproducible assets are Ricardian, and
the quality-discrimination logic is \citet{mussarosen1978}. This literature provides
a formal basis for treating unique natural and cultural assets as scarce:
anchors are in fixed supply and earn scarcity rents, which is why they belong in the
multiplicative
gate, while a constructed museum enters a competitive, reproducible category and
belongs in the menu.

\paragraph{Place and nation branding.} A parallel literature in marketing treats the
country itself as a brand -- a managed reputational asset that shapes how outsiders
evaluate everything sold under its name \citep{kotlergertner2002, anholt2007brand}.
A growing body of work now reviews this field of place branding
\citep{swain2024place}. That tradition emphasizes identity and positioning but is
largely informal about mechanism. The gate-and-menu model gives the managed
reputation an economic structure: it identifies which assets constitute the brand (the gate),
why a lapse at one of them affects all (multiplicative complementarity) and how
brand-value shares can be accounted for within the gate and menu blocks. The same
country name can also act as a collective signal in product markets; here the
signal is placed at the tourism participation margin rather than at the product
choice margin. For destination management, this converts place branding from only
a communication and positioning problem into an asset-governance problem: the
organization promoting the name must identify and coordinate the custodians whose
assets determine whether visitors enter the destination at all.

Finally, the paper relates to the quantitative spatial tradition on tourism and
development. \citet{fabergaubert2019} model tourism in a spatial equilibrium where
destinations are substitutes and cross-region spillovers largely cancel through
factor mobility; \citet{magontier2024coastal} show decentralized coastal
governments over-develop because they ignore the amenity their building destroys
for neighbors. And \citet{balboni2025harm} studies durable investment, lock-in
and policy myopia. More recent work lets neighboring destinations be complements as
well as substitutes, with spatial spillovers that need not cancel
\citep{jiao2025spillover}, though it gauges competitiveness across
destinations rather than the within-country propagation through a shared gate
that we study. We adopt the spatial-demand structure from
\citet{fabergaubert2019}, the decentralized-externality logic from
\citet{magontier2024coastal} and the durable-investment setting in
\citet{balboni2025harm}. But the mechanism here is a demand-side collective
reputation that need not cancel within the country.

\paragraph{Environmental and recreation valuation.} Valuing natural and cultural
assets is the subject of a large environmental-economics literature organized around
total economic value and the methods that recover it: contingent valuation, choice
experiments, the travel-cost method and ecosystem-service accounting
\citep{krutilla1967conservation, freeman2014measurement, carsonhanemann2005cv}. Its
revealed-preference branch often uses weak complementarity
\citep{maler1974environmental}, while multi-site recreation models allow detailed
substitution and complementarity across sites \citep{phaneufsmith2005recreation}.
Our framework complements this work rather than replacing it: it isolates a shared
quality argument at the national participation margin and uses Aumann--Shapley
accounting to partition the jointly produced brand value
\citep{herriges2004whats}. Section~\ref{sec:envval} sets out how the two compose.

\paragraph{Language models as measurement.} On the empirical side the paper uses the
framework of \citet{ludwig2025llm}, who distinguish predicting outcomes from text
(where training leakage threatens validity) from the measurement of economic
concepts (where a validation sample delivers consistent downstream inference). We use
a language model only in the second, safer mode -- to measure the gate-menu structure
and brand homogeneity that the data do not record directly. The broader literature on
treating text and images as economic data \citep{gentzkow2019text, dell2025deep}
supplies the tools; the discipline we take from \citet{ludwig2025llm} is that
cross-model agreement is only reliability and that validity ultimately requires an
external gold-standard sample.

\section{A gate-and-menu model of destination value}\label{sec:model}

\subsection{Assets, gate and menu}

A country contains a set of tourism assets. The primitive distinction is not
``large'' versus ``small'', but the margin on which the asset enters demand. Each
asset has a participation-salience component and a reproducibility component. Assets
that are essential for taking the trip, or that are non-reproducible reasons for the
trip, affect the participation decision: a visitor may not come to the country if
they fail. Assets that are reproducible and modular affect mainly the conditional
allocation of time after the trip has been chosen and reach participation through
the menu's inclusive value. All qualities are dimensionless indices normalized by
asset-specific full-quality benchmarks. The first group forms the \textit{gate set}
$K$, with quality levels $g=(g_k)_{k\in K}$ and
$g_k\in[\underline g_k,1]$, where $0<\underline g_k<1$. The positive lower bound
makes the custodians' feasible quality sets compact; limits toward zero are used
only for accounting and boundary arguments. The second group forms the
\textit{menu set} $M$, with normalized quality levels $a=(a_i)_{i\in M}$ and
$a_i\in(0,1]$. The weights below are defined relative to these normalizations, so
comparisons such as ``lowest quality'' do not depend on the original units of an
asset's engineering or service measure.

This motivates a simple sorting rule. Let $b_n$ denote asset $n$'s participation
salience and let $\lambda_n$ denote its reproducibility, with lower $\lambda_n$
meaning harder to reproduce. Given outside options, suppose there is a threshold
$\bar b(\lambda_n)$ such that assets with $b_n \ge \bar b(\lambda_n)$ enter the
participation constraint directly and are gate assets; assets with
$b_n<\bar b(\lambda_n)$ enter primarily through conditional itinerary utility and
affect participation only through the menu aggregate. The threshold is
decreasing in non-reproducibility: a unique natural or cultural asset needs less
incremental salience to become a reason for the trip than a reproducible attraction.
This is a disciplined partition, not a full derivation from a complete demand
system. Section~\ref{sec:measurement} makes the distinction operational from guide
text.

\textbf{The gate} combines its $|K| \ge 2$ assets with a weaker-link CES
aggregator -- a constant-elasticity-of-substitution function, the standard flexible
way to combine inputs with an adjustable degree of complementarity:
\begin{equation}\label{eq:gate}
  G(g) = \left[ \sum_{k \in K} \phi_k\, g_k^{\eta} \right]^{1/\eta},
  \qquad \phi_k > 0, \ \sum_k \phi_k = 1, \ \eta < 0.
\end{equation}
With $\eta < 0$ the gate assets are gross complements: the marginal product of each
gate asset is increasing in the quality of the others, and degrading a given gate
asset raises its own value share. Cross-asset rankings additionally depend on the
weights $\phi_k$. Two boundary cases are useful. As
$\eta \to 0^{-}$ the gate tends to Cobb-Douglas, $G \to \prod_k g_k^{\phi_k}$ (the
O-ring case, in which quality multiplies); as $\eta \to -\infty$ it tends to Leontief,
$G \to \min_k g_k$ (pure weakest-link). The interior $\eta < 0$ is the weaker-link
case \citep{corneshartley2007}: under equal weights the lowest gate quality matters
most, but does not by itself determine the whole gate as it would in the pure
weakest-link limit. With unequal weights, the relevant ranking is weighted quality.
Define the \textit{gate share}
\begin{equation}\label{eq:gateshare}
  \gamma_k(g) \equiv \frac{\phi_k g_k^{\eta}}{\sum_{l} \phi_l g_l^{\eta}},
  \qquad \sum_k \gamma_k = 1, \qquad
  \frac{\partial \ln G}{\partial \ln g_k} = \gamma_k.
\end{equation}
Under complementarity ($\eta < 0$) the gate share of an asset rises as it
degrades, and more the weaker it gets; shares rank assets by weighted quality
$\phi_k g_k^{\eta}$, so with equal weights the lowest-quality gate asset carries
the largest share. The case $|K|=1$ is degenerate but useful: the gate is then just the quality
of the single essential asset, and the weaker-link results become equalities.

\textbf{The menu} is a love-of-variety CES aggregator,
\begin{equation}\label{eq:menu}
  A(a) = \left[ \sum_{i \in M} w_i\, a_i^{\rho} \right]^{1/\rho},
  \qquad w_i > 0, \ \rho = \frac{\nu - 1}{\nu} \in (0,1), \ \nu > 1,
\end{equation}
so the menu attractions are substitutes, $A$ is concave in each $a_i$ and there is
love of variety (holding quality fixed, a wider menu raises $A$). The additive case
$A = \sum_i w_i a_i$ is the limit $\rho \to 1$ and carries the same
qualitative results. Define the \textit{menu share}
$s_i(a) \equiv w_i a_i^{\rho} / \sum_l w_l a_l^{\rho}$, with
$\partial \ln A / \partial \ln a_i = s_i$ and $\sum_i s_i = 1$. Because the menu
contains many small items, the typical menu share is small: it is of order $1/|M|$
when the menu terms $w_i a_i^{\rho}$ are bounded above and away from zero,
uniformly in the size of the menu. We call this \textit{menu regularity} and
maintain it wherever a claim concerns the growing menu
(Propositions~\ref{prop:asymmetry} and~\ref{prop:shapley}); it rules out a dominant
menu term in those asymptotic claims. The items are substitutes in the
standard CES sense, but they are valuable as varieties because a larger set raises
$A$; for $\rho \in (0,1)$ their cross-partials in the aggregate are positive.

\textbf{Destination value} is the product
\begin{equation}\label{eq:value}
  V(g, a) = G(g) \cdot A(a).
\end{equation}
The multiplicative form is the formal content of ``anchors and the menu are
complements'': a strong gate raises the value of every menu attraction, and a
gate that tends to zero ($G \to 0$) makes destination value tend to zero for any
finite menu. The
host-country inputs (safety, infrastructure, stability) belong in $G$, so
that growth and state capacity raise the productivity of every asset at once.
More general aggregators $\Phi(G,A)$ would preserve the gate--menu cross-effect
whenever $\Phi_G>0$, $\Phi_A>0$, $\Phi_{GA}>0$, and $\Phi(0,A)=0$ in the interior;
gate--gate complementarity would additionally restrict $\Phi_{GG}$. The product form is the
tractable benchmark that makes the accounting result in Section~\ref{sec:valuation}
transparent.

\subsection{Demand}

There is a positive market-size scale $\bar M$, and the rest of the world is held
fixed. Arrivals to the country are
\begin{equation}\label{eq:arrivals}
  T(g, a) = \bar M \cdot D\!\left( V(g,a) \right), \qquad D' > 0,
\end{equation}
with participation elasticity $\psi(V) \equiv \partial \ln D / \partial \ln V > 0$.
Here $D$ is a demand intensity, not a probability, so the constant-elasticity
benchmark used for the aggregate results,
$D(V)=(V/\bar V)^{\psi}$, need not be capped at one. If $\bar M$ is instead read as
a literal finite population of potential visitors, this benchmark is local to the
range on which $D(V)\le1$. Because $V=GA$, menu quality affects participation too:
$\partial\ln T/\partial\ln a_i=\psi s_i$. The distinction between gate and menu is
therefore a direct, non-vanishing participation effect versus a diluted
inclusive-value effect, not a zero participation effect for the menu. Conditional
on arriving, visitors allocate the trip
across regions; regions remain substitutes on this intensive margin, as in
\citet{fabergaubert2019}, and nothing below depends on the intensive allocation
except where stated. A full gravity closure would add competing countries and
diversion to them; the claims below are therefore within-country and
partial-equilibrium claims.

\section{The structural asymmetry}\label{sec:asymmetry}

The central result is that the two asset classes propagate shocks differently, and
that the difference is built into the structure rather than tuned through a
parameter.

\begin{proposition}[Gate and menu shocks differ in magnitude and reach]\label{prop:asymmetry}
The elasticity of destination value to a gate asset is its gate share, and to a
menu asset is its menu share:
\begin{equation}\label{eq:elasticities}
  \frac{\partial \ln V}{\partial \ln g_k} = \gamma_k,
  \qquad
  \frac{\partial \ln V}{\partial \ln a_i} = s_i .
\end{equation}
Three asymmetries follow. (i) \textbf{Magnitude}: a gate asset's share $\gamma_k$
does not shrink as the menu grows -- the weakest gate asset's share satisfies
$\gamma_k \ge \phi_k$, and tends to one for a unique binding asset as $\eta \to
-\infty$ -- while under menu regularity a menu asset's share $s_i$ is of order
$1/|M|$. For a fixed gate and a growing menu, the same infinitesimal proportional
degradation therefore reduces value more through the weakest gate asset than through
a menu attraction by the local elasticity ratio $\gamma_k/s_i$, which grows without
bound. Finite changes are nonlinear but have the same asymptotic ordering for a fixed
proportional degradation. (ii) \textbf{Reach}: a gate shock
moves the marginal value of every other asset (both cross-partials are positive,
$\partial^2 V / \partial g_k\,\partial a_i > 0$ for all $i \in M$, and
$\partial^2 V / \partial g_k\,\partial g_l > 0$ for $l \ne k$) and it does so
strongly: a change in $g_k$ scales the
marginal value of every menu attraction by the elasticity $\gamma_k$, and of every
other gate asset at least as strongly. Cross-effects run in both directions, since
$V$ is a product; the asymmetry is quantitative, not one of sign: a menu shock
scales the marginal value of every gate asset only by the vanishing elasticity
$s_i$, and it leaves the gate aggregate $G$, the gate shares, and every value
elasticity to the gate unchanged. (iii)
\textbf{Extensive margin}: under menu regularity, adding a new attraction whose
menu term $w a^{\rho}$ is bounded raises destination value by a vanishing
fraction, $\Delta V / V \to 0$ as the menu grows, so a marginal attraction is a
small and eventually vanishing share of brand value.
\end{proposition}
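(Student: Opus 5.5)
The plan is to work entirely in logs, since $\ln V=\ln G+\ln A$ separates the two blocks. The elasticities in \eqref{eq:elasticities} then follow at once from the share identities \eqref{eq:gateshare} and the menu-share identity stated after \eqref{eq:menu}. The identities themselves come from differentiating $\eta^{-1}\ln\sum_l\phi_l g_l^\eta$ and $\rho^{-1}\ln\sum_l w_l a_l^\rho$.

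For (i), the gate share $\gamma_k$ depends only on $g$ and $(\phi,\eta)$, not on $M$, so it does not shrink as the menu grows. For the weakest asset I will interpret ``weakest'' as minimal $g_k$. Because $\eta<0$, $g_k^\eta\ge g_l^\eta$ for all $l$, so $\sum_l\phi_l g_l^\eta\le g_k^\eta$, which gives $\gamma_k\ge\phi_k$. For the Leontief limit with a unique minimizer, divide numerator and denominator by $g_k^\eta$; each ratio $(g_l/g_k)^\eta=\exp(\eta\ln(g_l/g_k))\to0$ because $g_l>g_k$, so $\gamma_k\to1$. Menu regularity says $c\le w_ia_i^\rho\le C$ uniformly, which gives $c/(C|M|)\le s_i\le C/(c|M|)$, so $s_i=\Theta(1/|M|)$. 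Dividing, $\gamma_k/s_i\ge \phi_k c|M|/C\to\infty$.

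For finite proportional changes $g_k\mapsto\theta g_k$ and $a_i\mapsto\theta a_i$ with $\theta\in(0,1)$ fixed, I would write $\Delta\ln V$ as an integral of the share along the path. The gate loss is bounded below by $|\ln\theta|\min\gamma_k$ along the path, which is still at least $\phi_k$ by the same argument since $g_k$ stays minimal. The menu loss is at most $|\ln\theta|$ times the maximal share along the path. That maximal share is still $O(1/|M|)$, because the term lies in $[c\theta^\rho,C]$. So the ordering persists.

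For (ii), compute $V_{a_i}=G\,A_{a_i}$, so $\partial^2V/\partial g_k\partial a_i=G_{g_k}A_{a_i}>0$ and $\partial\ln V_{a_i}/\partial\ln g_k=\gamma_k$. For gate--gate terms, $G_{g_l}=\phi_l g_l^{\eta-1}G^{1-\eta}$, so
\[
\frac{\partial\ln G_{g_l}}{\partial\ln g_k}=(1-\eta)\gamma_k>\gamma_k>0
\]
for $l\neq k$. This gives the positive cross-partial and the ``at least as strongly'' claim, since $V_{g_l}=AG_{g_l}$ inherits the same elasticity. Symmetrically, $\partial\ln V_{g_l}/\partial\ln a_i=s_i=O(1/|M|)$. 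The quantities $G$, the shares $\gamma$, and $\partial\ln V/\partial\ln g_k=\gamma_k$ are functions of $g$ alone, so they are invariant to menu shocks. For (iii), adding a term $x=wa^\rho\le C$ gives $\Delta V/V=(1+x/S)^{1/\rho}-1$ with $S=\sum_l w_la_l^\rho\ge c|M|$. This tends to $0$ by continuity.

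The main obstacle is definitional rather than computational. The finite-change claim in (i) requires fixing what ``weakest'' means when weights are unequal, and checking that the minimizer stays minimal along the degradation path. I will state it for the minimal-quality asset, as the bound $\gamma_k\ge\phi_k$ requires, and note that the bound holds on the whole path because degradation only lowers $g_k$ further.
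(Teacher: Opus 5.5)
Your proposal is correct and follows the paper's proof essentially step by step. Both use log-separability of $V=GA$, the bound $\gamma_k\ge\phi_k$ from $(g_l/g_k)^{\eta}\le 1$, the menu-regularity bounds on $s_i$, the CES cross-elasticity $(1-\eta)\gamma_k\ge\gamma_k$ for gate--gate terms, and the explicit ratio $\bigl(1+wa^{\rho}/\sum_l w_l a_l^{\rho}\bigr)^{1/\rho}-1$ for the extensive margin. The only variation is the finite-change claim in (i): you integrate shares along the degradation path, giving log-losses of at least $\phi_k|\ln\theta|$ for the weakest gate asset and $O(|\ln\theta|/|M|)$ for a menu item, whereas the paper uses the exact CES ratios $[1+\gamma_k(\lambda^{\eta}-1)]^{1/\eta}$ and $[1+s_i(\lambda^{\rho}-1)]^{1/\rho}$; both yield the same asymptotic ordering, and yours makes the rates explicit.
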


Proposition~\ref{prop:asymmetry} resolves the identification problem that motivates the
model. In a single-aggregator bundle, a small attraction can be made to matter as
much as an anchor by choosing enough complementarity. Here the asymptotic difference
comes from block membership and menu regularity: a bounded menu term is divided by a
growing sum, whereas the weakest gate asset's share is bounded away from zero
independently of the menu and its gate multiplies the entire menu. Concavity alone
does not make an attraction's share small. The asymmetry does not
require the assets to differ in their own elasticities; it requires only that they
be sorted correctly into gate and menu, which is an empirical question
(Section~\ref{sec:measurement}).

The weaker-link form gives the gate an internal asymmetry.

\begin{proposition}[The binding anchor, softened]\label{prop:weakerlink}
For $\eta < 0$ and $|K| \ge 2$, the gate share $\gamma_k$ is strictly decreasing in
$g_k$: degrading an asset raises its own share, and more the weaker it gets. Shares
rank assets by weighted quality, $\gamma_k/\gamma_l = (\phi_k/\phi_l)(g_k/g_l)^{\eta}$,
so with equal weights the lowest-quality gate asset carries the largest share. As
$\eta \to -\infty$ the gate share concentrates on the binding set $I = \{l : g_l =
\min_j g_j\}$: $\gamma_k \to \phi_k/\sum_{l \in I}\phi_l$ for $k \in I$ (equal to
$1$ when the minimizer is unique) and $\gamma_k \to 0$ otherwise, while $G \to
\min_l g_l$ in every case. For finite $\eta < 0$ the weight on the worst-kept
asset rises as its quality falls without completely determining the gate on its
own, and because the gate set $K$ excludes the menu, no menu asset ever governs
the brand.
\end{proposition}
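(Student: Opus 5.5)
The plan is to work directly from the explicit gate-share formula \eqref{eq:gateshare} and handle the three claims in turn: monotonicity, the ratio identity, and the Leontief limit. Write $S=\sum_l \phi_l g_l^{\eta}$ and $x_k=\phi_k g_k^{\eta}$, so $\gamma_k=x_k/S$.

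\emph{Monotonicity and the ratio identity.} Differentiate in $\ln g_k$. Since $\partial x_k/\partial\ln g_k=\eta x_k$ and $\partial S/\partial \ln g_k=\eta x_k$, we get
\[
\frac{\partial \gamma_k}{\partial \ln g_k}=\eta\,\gamma_k(1-\gamma_k).
\]
For $|K|\ge 2$ every $\phi_l>0$ and every $g_l>0$, so $0<\gamma_k<1$. With $\eta<0$ the derivative is therefore strictly negative, and $\gamma_k$ is strictly decreasing in $g_k$. The phrase ``more the weaker it gets'' I will read in the level sense: as $g_k$ falls, $x_k=\phi_k g_k^{\eta}$ grows without bound because $\eta<0$. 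Hence $\gamma_k\to 1$ as $g_k\to 0$, so the share rises toward one as the asset degrades. I will state this reading explicitly rather than claim convexity, which need not hold globally. The ratio $\gamma_k/\gamma_l=x_k/x_l$ is immediate from the common denominator. With equal weights, $\gamma_k/\gamma_l=(g_k/g_l)^{\eta}$, which exceeds one exactly when $g_k<g_l$, so the lowest-quality asset carries the largest share.

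\emph{Leontief limit.} Let $m=\min_j g_j$ and divide numerator and denominator by $m^{\eta}$:
\[
\gamma_k=\frac{\phi_k (g_k/m)^{\eta}}{\sum_{l\in I}\phi_l+\sum_{l\notin I}\phi_l (g_l/m)^{\eta}}.
\]
For $l\notin I$ we have $g_l/m>1$, so $(g_l/m)^{\eta}\to 0$ as $\eta\to-\infty$, while for $l\in I$ the ratio equals one. This gives the stated limits: $\phi_k/\sum_{l\in I}\phi_l$ on the binding set $I$, and $0$ off it. For $G$, write
\[
G=m\Bigl[\sum_{l\in I}\phi_l+\sum_{l\notin I}\phi_l (g_l/m)^{\eta}\Bigr]^{1/\eta}.
\]
The bracket tends to $\sum_{l\in I}\phi_l\in(0,1]$, a positive constant, and its power $1/\eta\to 0$, so $G\to m$. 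The step that needs care is the bracket: I must check that it stays bounded between $\sum_{l\in I}\phi_l$ and $1$ so that raising it to $1/\eta$ gives a limit of one. This holds because every term $(g_l/m)^{\eta}$ lies in $(0,1]$.

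\emph{Remaining claims.} For finite $\eta$, the fact that the worst-kept asset does not determine the gate alone follows from $\gamma_k<1$, shown above. The final sentence, that no menu asset governs the brand, is definitional, since $G$ depends only on $g$ and $K\cap M=\emptyset$. I would cite Proposition~\ref{prop:asymmetry}(ii) for the point that menu shocks leave $G$ and the gate shares unchanged.
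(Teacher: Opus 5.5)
Your proof is correct and follows essentially the same route as the paper's: the same share derivative (your $\eta\gamma_k(1-\gamma_k)$ is the paper's log-elasticity $\eta(1-\gamma_k)$ times $\gamma_k$), the common-denominator ratio identity, and the normalization by $m^{\eta}$ with $m=\min_j g_j$ for the Leontief limit. Two details are slightly more careful than in the paper, which leaves them implicit: your squeeze of the bracket between $\sum_{l\in I}\phi_l$ and $1$ to get $G\to m$, and your explicit level reading of ``more the weaker it gets'' (the share rises toward one, with no convexity claim).
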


\section{The gate as a collective reputation}\label{sec:reputation}

So far the gate is a quality aggregator. Its economic content is also reputational:
prospective visitors form a belief about whether the country is safe, whether it
works and whether its anchors are intact. The reputation should load on the same
weaker-link object as the technological gate. Otherwise the paper would have two
different gates: a weaker-link CES in qualities and a linear average in conduct.

Let each gate asset be run at conduct $q_k \in (0,1]$, interpreted as the
probability that the asset or the operators behind it behave well (the statistic
below extends to zero conduct by continuity). Define the
weaker-link conduct statistic
\begin{equation}\label{eq:conduct}
  Q(q) = \left[ \sum_{k \in K} \omega_k q_k^{\eta} \right]^{1/\eta},
  \qquad \omega_k > 0,\ \sum_k \omega_k = 1,\ \eta < 0 .
\end{equation}
The benchmark sets $\omega_k=\phi_k$, but the notation separates conduct salience
from engineering quality. Prospective visitors do not observe conduct directly.
They observe public lapses with probability $x \in (0,1)$. Conditional on the
no-lapse signal, a simple reduced-form updating rule gives the umbrella
reputation as
\begin{equation}\label{eq:reputation}
  R(Q) = \frac{Q}{Q + (1 - Q)(1 - x)} .
\end{equation}
This expression is not a full Tirole posterior derived from individual histories
and a record law. It is a parsimonious reputational transformation of the
weaker-link statistic $Q$, not a separate linear average. Its concavity
(Lemma~\ref{lem:concave}) is a property of this one-parameter form, imposed to
reproduce the documented collective-reputation asymmetry
\citep{tirole1996collective}, not derived from informational primitives. It is not
an unconditional Bayesian law: an observed lapse would generate a different
posterior. In the reputation-adjusted specification, effective destination value is
\begin{equation}\label{eq:valueR}
  V_R(q,g,a)=R(Q(q))G(g)A(a)=R(Q(q))V(g,a),
\end{equation}
so the same weaker-link logic governs both margins: the technological gate is
bound by its weakest-quality member and the reputation by its weakest-conduct
member. These are the same asset when conduct and quality rank together -- the
principal case, in which a poorly maintained anchor is both low quality and poorly
operated -- but the model does not require it. The remaining propositions about
physical asset quality and menu quality condition on a fixed conduct profile $q$.
Thus $R(Q(q))>0$ is a multiplicative constant, arrivals are
$T_R(q,g,a)=\bar M D(V_R(q,g,a))$, and the previously derived $g$- and
$a$-elasticities are unchanged. The Aumann--Shapley calculation below likewise
allocates $V_R$ across physical asset qualities conditional on $q$.

If conduct is instead a choice variable, its separate elasticity is
\begin{equation}\label{eq:conductelasticity}
  \frac{\partial\ln V_R}{\partial\ln q_k}
  =\frac{Q R'(Q)}{R(Q)}\chi_k
  =\frac{1-x}{(1-x)+xQ}\chi_k.
\end{equation}
If the restriction $q_k=g_k$ were imposed, this term would be added to $\gamma_k$.
That endogenous-conduct governance model is not imposed here: the commons result
below concerns normalized physical quality $g_k$, holding conduct fixed.

\begin{lemma}[The concave reputation kernel]\label{lem:concave}
$R$ is strictly increasing and strictly concave in $Q$ on $[0,1]$, with $R(0) = 0$,
$R(1) = 1$, $R'(Q) = (1-x) / [\,(1-x) + xQ\,]^{2} > 0$ and $R'' < 0$. The
marginal effect of the aggregate conduct statistic $Q$ on reputation is small when
the brand's reputation is high, $R'(1) = 1 - x$, and large when it is not,
$R'(0) = 1/(1-x)$.
\end{lemma}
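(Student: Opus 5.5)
The plan is a direct calculus verification, after first rewriting the denominator of \eqref{eq:reputation} in a form that makes every claim transparent. Write $c \equiv 1-x \in (0,1)$. Then the denominator is
\[
Q + (1-Q)c = c + (1-c)Q = (1-x) + xQ \equiv h(Q),
\]
which is affine in $Q$ with positive slope $x$. Since $h(0)=1-x>0$ and $h$ is increasing, $h(Q)\ge 1-x>0$ on $[0,1]$. Hence $R(Q)=Q/h(Q)$ is well defined and smooth on a neighbourhood of $[0,1]$.

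The boundary values are immediate: $R(0)=0/h(0)=0$ and $R(1)=1/h(1)=1/1=1$.

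For monotonicity, apply the quotient rule:
\[
R'(Q)=\frac{h(Q)-Q\,h'(Q)}{h(Q)^2}=\frac{(1-x)+xQ-xQ}{h(Q)^2}=\frac{1-x}{[(1-x)+xQ]^2}.
\]
This is strictly positive because $x<1$. It is exactly the stated formula, so $R$ is strictly increasing on $[0,1]$.

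For concavity, differentiate once more:
\[
R''(Q)=-\frac{2x(1-x)}{[(1-x)+xQ]^3}.
\]
This is strictly negative because $x\in(0,1)$ and the denominator is positive. So $R$ is strictly concave on $[0,1]$, including the endpoints, since $h$ is bounded away from zero there.

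The endpoint slopes then follow by evaluating $R'$:
\[
R'(1)=\frac{1-x}{1^2}=1-x, \qquad R'(0)=\frac{1-x}{(1-x)^2}=\frac{1}{1-x}.
\]
Since $1-x<1<1/(1-x)$, and $R'$ is strictly decreasing by $R''<0$, the marginal effect of $Q$ is smallest where reputation is high ($Q=1$, $R=1$) and largest where it is low. This gives the verbal claim in the lemma.

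The only step that needs care is the algebraic simplification of the denominator. The $xQ$ terms in the numerator of the quotient rule cancel only once the denominator is written as $(1-x)+xQ$. Beyond that, the one substantive point is that strict positivity of $h$ on all of $[0,1]$ rests on $x<1$. This is what makes the formulas valid at $Q=0$, the extension by continuity noted after \eqref{eq:conduct}. The rest is routine.
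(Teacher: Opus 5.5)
Your proposal is correct and follows the paper's proof essentially line for line. The paper also rewrites the denominator as $m+(1-m)Q$ with $m=1-x$ (your $h(Q)$), computes $R'=m/[m+(1-m)Q]^2$ and $R''=-2m(1-m)/[m+(1-m)Q]^3$, and evaluates $R'(0)=1/m$ and $R'(1)=m$; its proof block additionally derives the conduct elasticity \eqref{eq:conductelasticity} and a Schur-concavity remark, neither of which is part of the lemma as stated.
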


Lemma~\ref{lem:concave} states the asymmetry between improving and degrading the
aggregate conduct statistic, in static form. When the umbrella is already
high-quality ($Q$ near one), a marginal common-conduct improvement has a small
effect: the marginal improvement in $Q$ is worth $1-x$,
which is small when monitoring is good. But a member that lowers $Q$ does
disproportionate damage. The slope steepens as $Q$ falls. The same
concavity has been documented for collective wine brands, where the premium
shifts from collective to individual reputation as quality rises
\citep{schatt2025collective, costanigro2010nested}.

Because $Q$ is weaker-link, the conduct share of asset $k$,
\[
  \chi_k(q) = \frac{\omega_k q_k^{\eta}}{\sum_l \omega_l q_l^{\eta}},
\]
rises as $q_k$ falls. For an individual member,
$\partial R/\partial q_k=R'(Q)Q\chi_k/q_k$: the marginal effect of lowering that
same member's conduct rises as it deteriorates. Cross-member rankings also depend on
the weights $\omega_k$, so the lowest raw conduct need not have the largest marginal
effect when weights differ. This comparative static comes from combining collective
reputation with a weaker-link gate.

For two distinct gate members, the interaction is
\[
  \frac{\partial^2 R(Q(q))}{\partial q_k\,\partial q_l}
  = R'(Q)Q_{kl} + R''(Q)Q_kQ_l, \qquad k\ne l,
\]
where $Q_k=\partial Q/\partial q_k$ and
$Q_{kl}=(1-\eta)Q\chi_k\chi_l/(q_kq_l)>0$. The first term is weaker-link
complementarity: improving one member raises the return to improving another. The
second term is the concavity of the reputational kernel. The net cross-partial is
therefore not imposed by a single CES share; its sign and magnitude depend on the
relative strength of these two terms.

The dynamic collapse mechanism is not used as a theorem in this paper. If arrivals
respond nonlinearly to reputation near a trust threshold, and if maintenance is
partly revenue-financed, a shock to one gate anchor can lower $Q$, then $R(Q)$, then
arrivals and maintenance revenue for the whole gate. With memory in public records,
the return path is slower than the fall, as in Tirole's persistence result. The
static propositions below do not require multiplicity, so the paper does not assume
a collapse region.

\section{Aggregation and valuation}\label{sec:valuation}

\subsection{Within-country non-neutrality}

\begin{proposition}[Within-country non-neutrality and gate supermodularity]\label{prop:aggregate}
Fix the conduct profile $q$. Total arrivals fall when any physical gate asset is
degraded: $\partial \ln T_R / \partial \ln g_k = \psi\, \gamma_k > 0$. The gate
elements are complements in reputation-adjusted value,
$\partial^2 V_R / \partial g_k\, \partial g_l > 0$ for $k \ne l$, and gate and menu
are complements, $\partial^2 V_R / \partial g_k\, \partial a_i>0$; with the
constant-elasticity participation $D(V_R)=(V_R/\bar V)^{\psi}$, national tourism
value $V_R^{\psi}$ is supermodular in the gate. Fix any floor quality vector
$g^{f}$ with $\underline g_k\le g^{f}_k\le g_k$ for all $k$. Relative to the
floor baseline $g^{f}$ and holding conduct and the menu fixed, the normalized value of any
combined set of gate assets is therefore at least the sum of the normalized values
of those gate assets taken separately.
\end{proposition}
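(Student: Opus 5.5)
The plan is to work throughout with logarithms, since with $q$ fixed $V_R = R(Q(q))\,G(g)\,A(a)$ and $R(Q(q))>0$ is a positive constant.

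\textbf{Elasticity of arrivals.} Write $\ln T_R = \ln\bar M + \ln D(V_R)$ and apply the chain rule:
\[
\partial\ln T_R/\partial\ln g_k = \psi(V_R)\,\partial\ln G/\partial\ln g_k = \psi\gamma_k .
\]
This is positive because $\psi>0$, $\phi_k>0$ and $g_k>0$.

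\textbf{Gate--menu cross-partial.} Compute
\[
\partial^2 V_R/\partial g_k\,\partial a_i = R\,G_k A_i .
\]
Here $G_k=\gamma_k G/g_k>0$ and $A_i=s_iA/a_i>0$, so the cross-partial is positive.

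\textbf{Gate--gate cross-partial.} Since $V_R = R\,A\,G$, the sign is that of $G_{kl}$. Differentiating $G=[\sum\phi_l g_l^\eta]^{1/\eta}$ gives $G_k=\phi_k g_k^{\eta-1}G^{1-\eta}$. Then
\[
G_{kl}=(1-\eta)\,\phi_k\phi_l\,g_k^{\eta-1}g_l^{\eta-1}G^{1-2\eta}
      =(1-\eta)\,G\,\gamma_k\gamma_l/(g_kg_l),
\]
which is positive because $\eta<0$. This is the same computation as $Q_{kl}$ in Section~\ref{sec:reputation}.

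\textbf{Supermodularity of $V_R^\psi$.} This is the step needing care, because supermodularity of a power of $G$ is not automatic when $\psi<1$. I would work with $\ln V_R^\psi=\psi\ln G+\text{const}$ and compute
\[
\partial^2 \ln G/\partial g_k\partial g_l=\partial(\gamma_k/g_k)/\partial g_l .
\]
From $\gamma_k=\phi_kg_k^\eta/S$ with $S=\sum_m\phi_m g_m^\eta$ we get $\partial\gamma_k/\partial g_l=-\eta\gamma_k\gamma_l/g_l$. Hence
\[
\partial^2\ln G/\partial g_k\partial g_l=-\eta\,\gamma_k\gamma_l/(g_kg_l)>0 ,
\]
so $\ln G$ is supermodular. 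Then $W=V_R^\psi=\exp(\psi\ln G+c)$ has
\[
W_{kl}=W\left[\psi\,(\ln G)_{kl}+\psi^2(\ln G)_k(\ln G)_l\right],
\]
and both terms are positive. So $W$ is supermodular on the lattice $\prod_k[\underline g_k,1]$ for every $\psi>0$. The same argument makes $V_R$ itself supermodular (the case $\psi=1$), matching the direct computation of $G_{kl}$ above.

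\textbf{Superadditivity over the floor.} For a set $S\subseteq K$, define $g^S$ as $g$ on $S$ and $g^f$ off $S$. Define the normalized value
\[
v(S)=F(g^S)-F(g^f),
\]
for $F=V_R$ or $F=V_R^\psi$. I would treat $F=V_R^\psi$, whose supermodularity was just shown, and note that $V_R$ is the special case $\psi=1$. The map $S\mapsto g^S$ sends unions to joins and intersections to meets. For disjoint $S,S'$ we have $g^{S\cap S'}=g^f$, so lattice supermodularity gives
\[
F(g^{S\cup S'})+F(g^f)\ge F(g^S)+F(g^{S'}),
\]
that is, $v(S\cup S')\ge v(S)+v(S')$. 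Induction over a partition of the combined set into singletons then gives $v(S)\ge\sum_{k\in S}v(\{k\})$.

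Supermodularity on a product of intervals follows from nonnegative cross-partials by integrating over the rectangle. The boundary $g_k=\underline g_k$ is harmless because $\underline g_k>0$.

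The main obstacle is the supermodularity of the power $V_R^\psi$. I would settle it through the log-supermodularity argument above rather than through the sign of $G_{kl}$ alone.
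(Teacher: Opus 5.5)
Your proof is correct and follows the paper's argument step for step: the same chain-rule elasticity $\psi\gamma_k$, the same CES cross-partial $(1-\eta)G\gamma_k\gamma_l/(g_kg_l)$, and the same join/meet argument relative to the floor $g^f$ (which uses $g^f\le g$), closed by induction. Your log-supermodularity route to $V_R^\psi$ only splits the paper's direct formula $\partial^2 G^\psi/\partial g_k\partial g_l=\psi G^\psi\gamma_k\gamma_l(\psi-\eta)/(g_kg_l)$ into its $\psi^2$ term and its $-\eta\psi$ term.
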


This is the contrast with the quantitative spatial tradition. In
\citet{fabergaubert2019}, a tourism boom in one region pulls mobile factors from
others, so local gains are offset elsewhere and the cross-region spillover nearly
cancels in the aggregate. The spillover here is a demand-side shift in how many
people come to the country. In the partial-equilibrium country model it does not
cancel within the country: degrading a gate anchor removes visitors from the
country, not only from the affected region. In a world model the global sign is
indeterminate because some visitors would be diverted to other countries. The
within-country externality remains the object of interest here.

\subsection{What an asset is worth}

This subsection determines what a single asset contributes to the brand as a whole,
using a standard tool from cooperative game theory. The economic asymmetry it
prices is a within-block one: binding
gate anchors receive non-vanishing brand-value shares, while individual menu
attractions receive shares of order $1/|M|$. For continuous quality, the appropriate accounting object is not a discrete
Shapley game that depends on an arbitrary baseline, but the Aumann--Shapley value
\citep{aumannshapley1974}. This is a static accounting exercise. Durable stocks,
discounting and recovery dynamics are outside this object. The method prices each
quality dimension by integrating its marginal contribution along a common expansion
path from the no-tourism origin to the observed brand, holding conduct fixed:
\[
  \alpha_n^{AS}(V_R\mid q) =
  \int_0^1 y_n
  \left.\frac{\partial V_R(q,z)}{\partial z_n}\right|_{z=ty}\,dt ,
\]
where $y=(g,a)$ and the expression is understood as the positive-limit path when a
gate coordinate approaches zero. The path is an accounting extension below the
positive feasible floor, not a feasible policy path. This allocation does not depend
on a discretionary restoration floor and exactly exhausts $V_R$ conditional on $q$.

\begin{proposition}[Aumann--Shapley accounting under a product index]\label{prop:shapley}
Under the reputation-adjusted structure \eqref{eq:valueR}, holding $q$ fixed, the
Aumann--Shapley allocation of brand value across physical gate and menu qualities is
\begin{equation}\label{eq:as}
  \alpha_k^{AS} = \frac{1}{2}\gamma_k V_R
  \quad (k \in K), \qquad
  \alpha_i^{AS} = \frac{1}{2}s_i V_R
  \quad (i \in M).
\end{equation}
The gate block and menu block each receive one half of brand value along the common
diagonal path, and assets split each block according to their gate and menu shares.
Thus a binding gate anchor receives a non-vanishing share of brand value, while a
menu attraction receives a share of order $s_i/2$, which is $O(1/|M|)$ under menu
regularity.
\end{proposition}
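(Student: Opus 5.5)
The plan is to exploit homogeneity along the diagonal path. With $q$ fixed, write $V_R(q,z)=R(Q(q))\,G(z_g)\,A(z_a)$ with $z=(z_g,z_a)$, and take $y=(g,a)$. Both CES aggregators \eqref{eq:gate} and \eqref{eq:menu} are homogeneous of degree one. Along the path $z=ty$ this gives $G(tg)=tG(g)$ and $A(ta)=tA(a)$.

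Their partial derivatives are homogeneous of degree zero, so $\partial G/\partial g_k$ evaluated at $tg$ equals $\partial G/\partial g_k$ evaluated at $g$. For a gate coordinate, the integrand is therefore
\[
y_k\,\frac{\partial V_R}{\partial z_k}\Big|_{ty} = R\, g_k\,G_k(g)\,A(ta) = R\,t\,g_k G_k(g) A(a) = t\,\gamma_k\,R\,G(g)A(a) = t\,\gamma_k V_R ,
\]
using $g_kG_k/G=\gamma_k$ from \eqref{eq:gateshare}. The gate share $\gamma_k(tg)=\gamma_k(g)$ is invariant along the ray, because the weighted ratios $\phi_k (tg_k)^\eta/\sum_l\phi_l(tg_l)^\eta$ do not depend on $t$. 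The menu case is symmetric: the integrand is $t\,s_i V_R$, using $a_iA_i/A=s_i$. Since $\int_0^1 t\,dt=1/2$, this yields \eqref{eq:as}.

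The step I expect to need care is the behaviour near the origin. Gate coordinates lie below the feasible floor $\underline g_k$ there, and with $\eta<0$ the term $g_k^\eta$ blows up as $t\to0$. I will handle this by noting that for $t>0$ every quantity above is well defined, since $tg_k>0$. The integrand $t\gamma_kV_R$ is continuous on $(0,1]$ and bounded, so the integral is the positive-limit (improper) integral the paper specifies, and it converges. No feasibility of the path is needed, because it is an accounting extension.

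Finally I will verify exhaustion and the corollaries. Summing over $k$ and $i$ gives $\tfrac12 V_R\sum_k\gamma_k+\tfrac12 V_R\sum_i s_i=V_R$; alternatively, this follows from Euler's theorem, since $V_R$ is homogeneous of degree two in $(g,a)$. For the binding anchor, invoke Proposition~\ref{prop:asymmetry}(i) and Proposition~\ref{prop:weakerlink}. The weakest gate asset has $\gamma_k\ge\phi_k$, independently of $M$, so its share $\alpha_k^{AS}/V_R=\gamma_k/2\ge\phi_k/2$ is non-vanishing. Under menu regularity $s_i=O(1/|M|)$, hence $\alpha_i^{AS}/V_R=s_i/2=O(1/|M|)$.
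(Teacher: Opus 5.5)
Your proposal is correct and follows the paper's proof. Both use the degree-one homogeneity of the two CES aggregators along the diagonal $z(t)=(tg,ta)$, reduce the integrand to $t\gamma_k V_R$ (resp.\ $t s_i V_R$), integrate $\int_0^1 t\,dt=\tfrac12$, check exhaustion via $\sum_k\gamma_k=\sum_i s_i=1$, and read off the corollaries from $\gamma_k\ge\phi_k$ and $s_i=O(1/|M|)$ in Proposition~\ref{prop:asymmetry}(i). Invoking degree-zero homogeneity of $G_k$ instead of share invariance, and adding the Euler cross-check, are cosmetic variations.
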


The one-half gate/menu split is a property of the degree-two product index and the
common diagonal path, not an independent economic asymmetry.\footnote{Along
$z(t)=(tg,ta)$, $G(tg)=tG(g)$ and $A(ta)=tA(a)$, so
$V_R(q,z(t))=t^2V_R(q,g,a)$. The
one-half block allocation is the integral of $t$ over the unit interval. A different
homogeneous structure or a different aggregator $\Phi(G,A)$ would change the block
split.} If conduct $q$ were itself included in the common expansion path, the
conditional half split would change. In that expanded accounting, define
$I_R(Q)=\int_0^1tR(tQ)\,dt$. The gate-quality block receives $GAI_R(Q)$, the menu
block receives the same amount, and conduct coordinate $q_k$ receives
$\chi_kGAQ\int_0^1t^2R'(tQ)\,dt$; integration by parts shows that these allocations
sum to $V_R$. We do not use this expanded object because conduct is held fixed in
the physical-quality governance model. The economic asymmetry is within blocks. An anchor's value-share is not its
ticket revenue or the spending of visitors who physically reach it. It is its
Aumann--Shapley share of the brand index (or of a money-metric welfare or revenue
object proportional to that index). A minor museum can still be valuable in absolute
terms when the destination is large, but its share of the brand is small.

The Aumann--Shapley share answers an accounting question -- how the standing level
of brand value decomposes across assets -- and not the marginal policy question.
What maintaining an asset is worth at the margin, including its effect on other
regions' arrivals, is the welfare wedge $\tau_k^{W}$ of
Section~\ref{sec:commons}, a distinct object: protection priorities follow the
wedge, while the shares price the standing stock. The two rank assets the same way
only under conditions the paper does not impose.

\subsection{Relation to environmental valuation}\label{sec:envval}

Environmental economics has a large set of methods for valuing natural and cultural assets.
Its organizing concept is total economic value -- use value alongside option, existence
and bequest value \citep{krutilla1967conservation, freeman2014measurement} -- recovered
by contingent valuation, choice experiments, the travel-cost method and
ecosystem-service accounting \citep{carsonhanemann2005cv, adamowicz1994combining,
costanza1997value}. Applied to nature-based tourism, these methods estimate what an
asset is worth to those who use or would preserve it
\citep{mukanjari2021valuation, naidoo2005biodiversity, choi2010heritage}. The
framework here values a different object.

The standard revealed-preference apparatus is organized by \emph{weak
complementarity} \citep{maler1974environmental}: an asset's quality matters through
the individual's use of the relevant good. Multi-site recreation models allow
detailed cross-site substitution and complementarity \citep{phaneufsmith2005recreation},
so the claim is not that such effects are absent from that literature; it is that the
gate acts at the participation margin. Degrading one anchor lowers the attractiveness
of visiting the country at all, including for travelers whose itinerary lay in other
regions. The two approaches must therefore be partitioned, not added. If a
stated-preference or travel-cost design already includes a national gate attribute --
safety, border functioning, the condition of a prominent anchor -- the brand channel is
partly inside its estimated total economic value; if it values own-site use with the
gate held fixed, the Aumann--Shapley allocation of Proposition~\ref{prop:shapley}
decomposes the jointly produced brand term into asset shares without double-counting.
The contribution is to identify the cross-asset brand term, show why it can be large
for binding gate anchors, and connect it to the governance failure. Under the
conditions stated below, a custodian that does not capture the full domestic marginal
surplus chooses weakly lower physical quality, holding the other qualities fixed.

\section{The gate is a commons}\label{sec:commons}

The assets are not chosen by a planner. Each is run by a custodian -- a parks
agency, a municipality, a provincial government, a concession holder, a developer
-- who captures the local benefits of quality and the local rents of reducing an
asset's quality, but who observes only the tourism income accruing to its own region.
Fix the conduct profile $q$. Custodian $k$ of a gate anchor chooses normalized
physical quality $g_k$ to maximize
\begin{equation}\label{eq:custodian}
  \pi_k(g_k;g_{-k},a,q)=B_k(1-g_k)+\mu_kN_k(q,g,a)-c_k(g_k),
\end{equation}
where $B_k$ is the development or extraction benefit from degrading below pristine,
$\mu_k$ the captured net return per arrival, $N_k$ region $k$'s arrivals, and
$c_k$ the cost of maintaining quality. The term $\mu_kN_k$ is captured private or
fiscal revenue, not social surplus. For welfare accounting let $S_j(N_j)$ denote
domestic surplus generated by tourism in region $j$, including producer surplus,
domestic visitor surplus and fiscal revenue net of tourism-production costs, with
$S'_j\ge0$. To avoid double-counting, $S_j$ excludes the asset-specific development
benefit $B_k$ and maintenance cost $c_k$. Let $\zeta_k(g_k)$ denote non-use, option,
identity or existence value omitted from the custodian's objective. The domestic
welfare objective is explicitly
\begin{equation}\label{eq:welfare}
  W(q,g,a)=\sum_{k\in K}\bigl[B_k(1-g_k)-c_k(g_k)\bigr]
  +\sum_jS_j(N_j(q,g,a))+\sum_{k\in K}\zeta_k(g_k).
\end{equation}
Thus $B_k$ and $c_k$ receive the same domestic valuation in the private and social
problems; alternative ownership, transfer or environmental-cost assumptions would
add corresponding terms to the wedge.

\begin{proposition}[Conditional under-maintenance of a gate asset]\label{prop:commons}
Fix $(q,g_{-k},a)$. Suppose $N_j=h_jT_R$, where $h_j\ge0$ is independent of $g_k$
and $h_k>0$. Throughout the feasible quality interval, suppose the
custodian's captured return does not exceed its own-region domestic marginal surplus,
$S'_k(N_k)\ge\mu_k$, and non-use value is nondecreasing,
$\partial\zeta_k/\partial g_k\ge0$. Assume the private problem \eqref{eq:custodian}
and the one-coordinate social problem \eqref{eq:welfare} have nonempty interior
maximizer sets. Then the social argmax dominates the private argmax in the strong set
order; in particular, their smallest and largest selections are weakly ordered. With
unique maximizers, $g_k^*\ge g_k^{dec}$. The
welfare wedge between the social and private marginal benefit of conserving gate
asset $k$, at the prevailing quality profile, is
\begin{equation}\label{eq:wedge}
  \tau_k^{W}
  = \sum_j S'_j(N_j)\frac{\partial N_j}{\partial g_k}
  - \mu_k\frac{\partial N_k}{\partial g_k}
  + \frac{\partial \zeta_k}{\partial g_k}.
\end{equation}
It is strictly positive over the comparison interval -- and unique interior optima
are strictly ordered -- if the weak inequalities above hold and at least one of the
following does: $S'_k>\mu_k$; another region has $S'_j>0$ and $h_j>0$; or
$\partial\zeta_k/\partial g_k>0$. This
$\tau_k^{W}$ is the first-best
wedge; a marginal cost of public funds or incidence on foreign visitors scales the
corrective instrument rather than leaving it equal to $\tau_k^{W}$
(Section~\ref{sec:policy}). With
non-distortionary public funds and no incidence correction, any transfer schedule
whose marginal slope at the target equals $\tau_k^W$ implements the domestic welfare
first-order condition there. A constant maintenance subsidy of $\tau_k^W$ per unit
of gate quality -- equivalently a tax of $\tau_k^W$ per unit of degradation -- is
one such schedule. A per-arrival revenue-sharing increment instead has units of
money per arrival and must satisfy
$\Delta\mu_k=\tau_k^W/(\partial N_k/\partial g_k)$ at the target. Under the weaker-link gate
(Proposition~\ref{prop:weakerlink}), in the Leontief limit the brand is bound by
whichever custodian chooses the lowest quality; among otherwise identical
custodians, the quality argmax correspondence at finite $\eta<0$ is nondecreasing in
the captured return $\mu_k$ and nonincreasing under upward shifts in marginal
development benefit or maintenance cost. For convergent selections this weak order
is preserved as $\eta\to-\infty$, when national tourism is bound by the lowest chosen
quality.
\end{proposition}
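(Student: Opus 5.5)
The plan is to reduce the comparison to a monotone comparative-statics argument in the single coordinate $g_k$. Fix $(q,g_{-k},a)$ and write $f_P(g_k)=\pi_k(g_k;g_{-k},a,q)$ for the private objective. Write $f_S(g_k)$ for the part of $W$ in \eqref{eq:welfare} that depends on $g_k$, namely $B_k(1-g_k)-c_k(g_k)+\sum_j S_j(h_jT_R)+\zeta_k(g_k)$; the terms for $l\ne k$ are constants and do not affect the argmax. The two objectives share the terms $B_k(1-g_k)-c_k(g_k)$. So the difference
\[
\Delta(g_k)=f_S(g_k)-f_P(g_k)=\sum_j S_j\bigl(h_jT_R\bigr)-\mu_k h_kT_R+\zeta_k(g_k)
\]
does not involve $B_k$ or $c_k$ at all. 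The key step is to show that $\Delta$ is nondecreasing on $[\underline g_k,1]$. If it is, then $f_S=f_P+\Delta$ has (weakly) increasing differences relative to $f_P$ in the parameter that indexes ``private versus social''. Topkis' monotonicity theorem, or more directly the Milgrom--Shannon single-crossing theorem for a one-dimensional choice set (which is a lattice), then gives the strong set order between the argmax sets. The statements about the smallest and largest selections, and $g_k^*\ge g_k^{dec}$ under uniqueness, follow immediately.

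To show that $\Delta$ is nondecreasing, I differentiate along $g_k$ (or argue via monotone increments if differentiability of $S_j$ is in doubt):
\[
\Delta'(g_k)=\sum_j S'_j h_j\,\partial T_R/\partial g_k-\mu_k h_k\,\partial T_R/\partial g_k+\partial\zeta_k/\partial g_k.
\]
This is exactly the wedge $\tau_k^W$ in \eqref{eq:wedge}, since $\partial N_j/\partial g_k=h_j\,\partial T_R/\partial g_k$ with $h_j$ independent of $g_k$. By Proposition~\ref{prop:aggregate}, $\partial T_R/\partial g_k=T_R\psi\gamma_k/g_k>0$. The wedge then splits into three nonnegative pieces. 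The first is $(S'_k-\mu_k)h_k\,\partial T_R/\partial g_k\ge0$. The second is $\sum_{j\ne k}S'_jh_j\,\partial T_R/\partial g_k\ge0$. The third is $\partial\zeta_k/\partial g_k\ge0$. Each of the three listed strict conditions makes the corresponding piece strictly positive, so $\tau_k^W>0$ on the interval.

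Strict ordering of unique interior optima then comes from the first-order conditions. At $g_k^{dec}$ we have $f_P'=0$, so $f_S'(g_k^{dec})=\tau_k^W>0$. Hence $g_k^{dec}$ is not a social critical point, and the weak ordering already established becomes $g_k^*>g_k^{dec}$. The implementation claims are direct FOC algebra. Adding a transfer with marginal slope $\tau_k^W$ at the target makes $f_P'+\tau_k^W=f_S'$ there. A per-arrival increment adds $\Delta\mu_k\,\partial N_k/\partial g_k$ to the private marginal benefit, which gives the stated formula.

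For the Leontief and ordering remarks, I would apply Proposition~\ref{prop:weakerlink} for $G\to\min_l g_l$. For monotonicity in $\mu_k$, note that $\partial^2\pi_k/\partial g_k\partial\mu_k=h_k\,\partial T_R/\partial g_k>0$, so $\pi_k$ has increasing differences in $(g_k,\mu_k)$. Upward shifts in marginal $B_k$ or $c_k$ give decreasing differences, and Topkis again yields the stated monotonicity of the argmax correspondence. Preservation as $\eta\to-\infty$ for convergent selections follows because weak inequalities survive limits. The main obstacle I expect is regularity. The hypotheses only say that $S_j$ and $\zeta_k$ are monotone and that $S'_k\ge\mu_k$, not that they are smooth, and along $g_k$ the argument $N_k$ moves while the inequality is evaluated pointwise. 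I would handle this by writing $\Delta(g'')-\Delta(g')$ as an integral of $\tau_k^W$ over $[g',g'']$, assuming absolute continuity as implicit in the use of derivatives in the statement. That monotone-difference step is all that Topkis needs.
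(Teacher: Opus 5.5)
Your proposal is correct and follows essentially the paper's own argument. The paper likewise shows that $W-F$ is nondecreasing because its derivative is the wedge $\tau_k^W\ge0$, split into the same three nonnegative terms via $\partial N_j/\partial g_k=h_j\psi\gamma_kT_R/g_k$; it then applies monotone comparative statics to the binary private/social family, gets strictness from the first-order conditions, and gets the $\mu_k$-comparative statics from increasing differences. The only differences are cosmetic: you check $f_S'(g_k^{dec})>0$ where the paper checks $F'(g_k^*)<0$, and you state the absolute-continuity assumption that the paper leaves implicit.
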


This is the common-pool asset implied by \citet{winfreemccluskey2005} and the wine evidence
\citep{schatt2025collective, castriotadelmastro2015}.
The physical gate is a common-pool asset: every region benefits from it, every
custodian can damage it and no custodian pays for all of the harm it does the
others. The fixed-conduct reputation factor scales this physical-quality wedge; an
endogenous conduct wedge would additionally use \eqref{eq:conductelasticity}. The
closest tourism analogue is the common-pool treatment of \citet{pintassilgosilva2007},
where open access to a shared environmental quality leads to its
overexploitation; the shopping-mall literature provides the single-owner comparison, a
landlord who internalizes the anchor externality by subsidizing it
\citep{gouldpashigian2005}, which is the single-owner benchmark for internalizing
the externality.
Two features of the present model refine the standard commons result. First, the
elasticity and the effect relative to a comparable gate asset vanish for an
individual menu attraction as a regular menu grows. Its absolute arrival or monetary
externality need not vanish: with comparable menu terms it is proportional to
$|M|^{\psi/\rho-1}$ and requires $\psi<\rho$ to converge to zero. Coordination
should therefore prioritize the gate on the normalized margin while still comparing
absolute benefits with costs. As $\eta\to-\infty$, the tourism-spillover component
of the correction concentrates on the unique binding gate asset; direct terms such
as $\partial\zeta_k/\partial g_k$ need not. For finite $\eta$ the other gate assets
still have positive but smaller tourism shares. Second, Proposition~\ref{prop:weakerlink}
ranks finite-$\eta$ shares by weighted quality and identifies the lowest normalized
quality as the binding constraint in the Leontief limit. Among
otherwise comparable custodians, the comparative statics in
Proposition~\ref{prop:commons} connect lower captured returns or higher marginal
development and maintenance costs to weakly lower selected quality. Coordination
therefore matters especially when custodians are heterogeneous, which
is \citeauthor{corneshartley2007}'s \citeyearpar{corneshartley2007} point for
weaker-link goods and \citeauthor{candelafigini2012}'s
\citeyearpar{candelafigini2012} Coordination theorem for the destination bundle.
Conditional on comparable salience and intervention costs, a proportional quality
improvement at the worst-kept anchor yields more national tourism value than the
same proportional improvement at one already secure. The return per dollar still
depends on the cost of improving each asset.

The model identifies the wedge; it does not imply that a central authority is always
the best institution. The first-best transfer result above assumes non-distortionary
public funds. With a marginal cost of public funds, a maintenance subsidy must pass
the usual second-best test: the domestic marginal benefit of the gate improvement
must exceed the fiscal cost inclusive of the financing distortion
\citep{sandmo1975externalities, atkinsonstern1974, ballardfullerton1992,
diamond1973externalities}. Incidence also matters. If foreign visitors bear part of
a tourism tax, a national objective may include tax exporting, while a global
welfare objective would count foreign visitor surplus and may imply a lower tax.
A self-governing club of custodians with monitoring and sanctions, in the sense of
\citet{ostrom1990governing}, can implement the same wedge when members can observe
gate quality and discipline the custodian who binds the weak link. Benefit-share
transfers can be interpreted as Lindahl-style prices for an impure public good; more
formal demand-revealing mechanisms are theoretically relevant but informationally
demanding \citep{clarke1971multipart, grovesledyard1977, cornes1996theory}. A
national authority is useful only under the same condition: it must have enough
information and independence to target the binding gate asset rather than the most
visible or politically connected one. Capture is endogenous to the wedge: the
custodian whose asset binds the gate may also be the actor most able to block
enforcement.

\section{From theory to measurement}\label{sec:predictions}

The theory yields four predictions. (P1) \textbf{Asymmetry}: a negative shock to a
gate anchor lowers national demand and the marginal value of all assets, while a
menu attraction is a vanishing share
of the brand (Propositions~\ref{prop:asymmetry}--\ref{prop:weakerlink}; the
arrivals statement is Proposition~\ref{prop:aggregate}). (P2)
\textbf{Conduct contamination}: a low-conduct member lowers $Q$ and weakens the
umbrella for all members; because the reputational kernel is concave, a given
reduction in $Q$ has greater marginal damage when $Q$ is already low
(Lemma~\ref{lem:concave}). With equal conduct weights and a fixed mean conduct level,
a mean-preserving spread also lowers the symmetric concave CES statistic $Q$.
Collective wine brands provide the closest published pattern of this kind
\citep{schatt2025collective}, and Section~\ref{sec:results} uses it only as an
illustration. Translating $R(Q)$ into a price premium requires the additional
empirical assumption that the premium is increasing in umbrella reputation; its
sign is not a theorem of the model. (P3)
\textbf{Valuation}: Aumann--Shapley accounting assigns non-vanishing within-gate
shares to binding gate anchors and small within-menu shares to individual menu
attractions (Proposition~\ref{prop:shapley}).
(P4) \textbf{Commons}: under the conditions of
Proposition~\ref{prop:commons}, physical gate quality is weakly under-maintained in
the conditional one-asset comparison; in the Leontief limit and among otherwise
comparable custodians, national tourism binds at the lowest selected quality.
Figure~\ref{fig:asymmetry} illustrates the two
halves of the asymmetry: the concave umbrella, and the aggregate effect of a gate
shock against the small-share effect of an individual menu shock.

The empirical contribution is to make these predictions operational. Each rests on
sorting attractions into gate and menu and on a measure of how homogeneous a
country's gate is, and both are latent. In principle the classification is identified
by reason-for-visit surveys (anchors are the primary reason for the trip),
visitor-flow and length-of-stay data (anchors generate inbound flow and anchor
itineraries), and network centrality among co-visited
attractions (anchors are central, modular attractions peripheral, a proxy for $\phi_k$
versus $w_i$). These data are not available at cross-country scale. We therefore measure the
constructs from text with language models (Section~\ref{sec:measurement}) and,
because prediction P2 is a within-brand claim, illustrate it on the four French
wine regions with the closest published collective-brand premiums
(Section~\ref{sec:results}).

\begin{figure}[htbp]
\centering
\includegraphics[width=\linewidth]{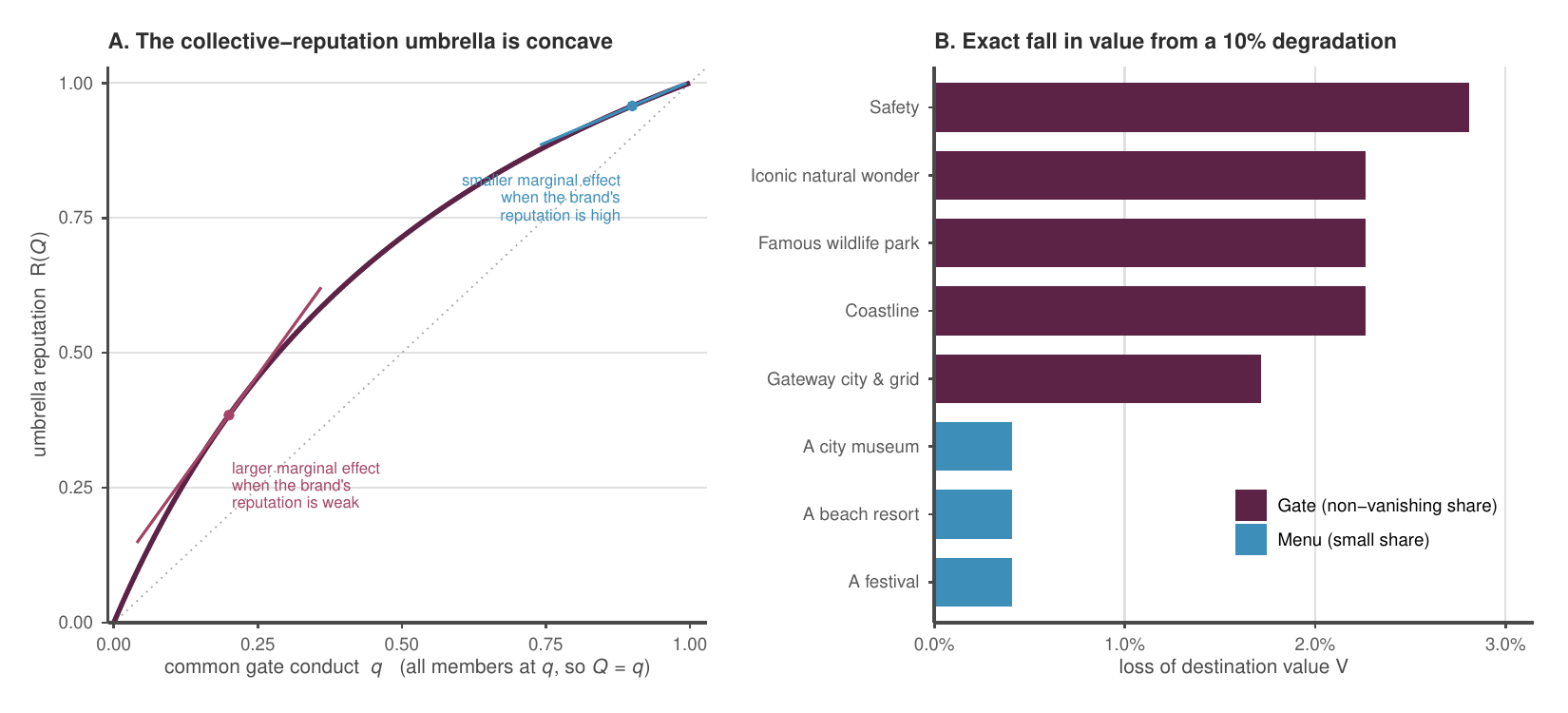}
\caption{The gate-and-menu asymmetry. \textbf{Left}: the collective-reputation umbrella
$R(Q)$ is concave, so the marginal effect of improving aggregate conduct $Q$ is
small when the brand's reputation is high and larger when it is weak
(Lemma~\ref{lem:concave}); the panel is drawn along the
common-conduct path, on which every member is at conduct $q$ and the weaker-link
statistic $Q$ equals $q$. \textbf{Right}: the exact proportional loss of destination
value from a 10\% quality degradation, using $\eta=-2$, $\rho=0.5$, and a regular
25-item menu. The assets are archetypes, not a specific destination; the parameters
are illustrative (Proposition~\ref{prop:asymmetry}).}
\label{fig:asymmetry}
\end{figure}

\section{Measurement: classifying gate and menu with a language model}\label{sec:measurement}

The model's constructs -- which assets are gate anchors, and how homogeneous a
country's gate is -- are latent and text-legible. This is the setting in which
\citet{ludwig2025llm} argue that large language models can be useful measurement
instruments, provided the measurement error is made explicit. We apply the
instrument to public guide text (the Wikivoyage country page and the Wikipedia
``Tourism in X'' and country articles) for every economy that meets a fixed
sample rule: non-missing World Bank tourism receipts and arrivals for at least
one year in 2015--2019, an existing Wikivoyage page and at least 1,500
characters of combined guide text. The rule yields 166 economies, spanning
brands from the single-theme (Maldives, Botswana) to the highly heterogeneous
(United States, India). Every excluded candidate is recorded, with its
exclusion reason, in the replication registry. For each country the model
returns a \textit{brand homogeneity} index
$H \in [0,100]$ -- the degree to which one theme dominates the brand -- and, for
each named attraction, a gate-versus-menu label.

Theme homogeneity is a proxy for
the model's quality-conduct homogeneity, not the same object: a country can have a
concentrated theme and still have low-quality members, or a heterogeneous theme mix
without quality contamination. The formal dispersion comparison additionally holds
mean conduct and conduct weights fixed; $H$ observes neither condition. It is
therefore a descriptive proxy rather than a direct estimate of $Q$.

Two language models from different families -- Claude (Opus) and OpenAI's Codex
(gpt-5.5) -- code every country independently from the same text and rubric,
through scripted, logged application-programming-interface calls (Appendix). They
agree strongly: the correlation of the homogeneity index across coders is $r =
0.84$ over the 166 economies (Figure~\ref{fig:reliability}), and they agree on
the dominant theme for 136 of 166 countries (82\%). The instrument is stable:
the scripted Claude coder reproduces the frozen original 42-country pass at
$r = 0.93$ on overlapping economies and reproduces itself at $r = 0.97$ in a
duplicate twenty-country run (mean absolute difference 2.3 points), while
rephrasing the rubric in two variants yields cross-variant correlations of
$0.59$--$0.89$ on a thirty-country subsample. This is evidence of reliability, not validity. The two coders use the same
source text and may share systematic training-corpus errors, so one model's
score is not a credible instrument for the other's error without a human
gold-standard sample -- which would require a large, internationally diverse
panel of raters and is beyond the scope of this paper. The empirical claims
below therefore do not use the model-against-model errors-in-variables
correction as identifying evidence; the measurement rests on cross-model
reliability and on the World Heritage odds ratio, an external check on the gate
labels.

The gate-versus-menu labels pass a partial external check. We match the named
attractions from each coder to the UNESCO World Heritage List within each country:
992 Claude-coded attractions, 685 Codex-coded attractions and 269 name-matched
attractions on which the coders agree. In the agreement sample, anchors appear on
the List at 6.4 times the odds of menu attractions (Fisher exact $p < 0.001$; the
odds ratios are 2.7 and 4.2 for each coder alone). Non-reproducible, iconic assets
sort into the gate, as part of the construct requires. This check bears on the
non-reproducibility and iconicity of the gate assets, not on the demand margin:
that gate assets move the decision to visit at all is built into the labeling
rubric, not tested by the World Heritage match. Direct evidence there would need
the reason-for-visit or booking data noted above, which do not exist at
cross-country scale.

\begin{figure}[htbp]
\centering
\includegraphics[width=0.7\linewidth]{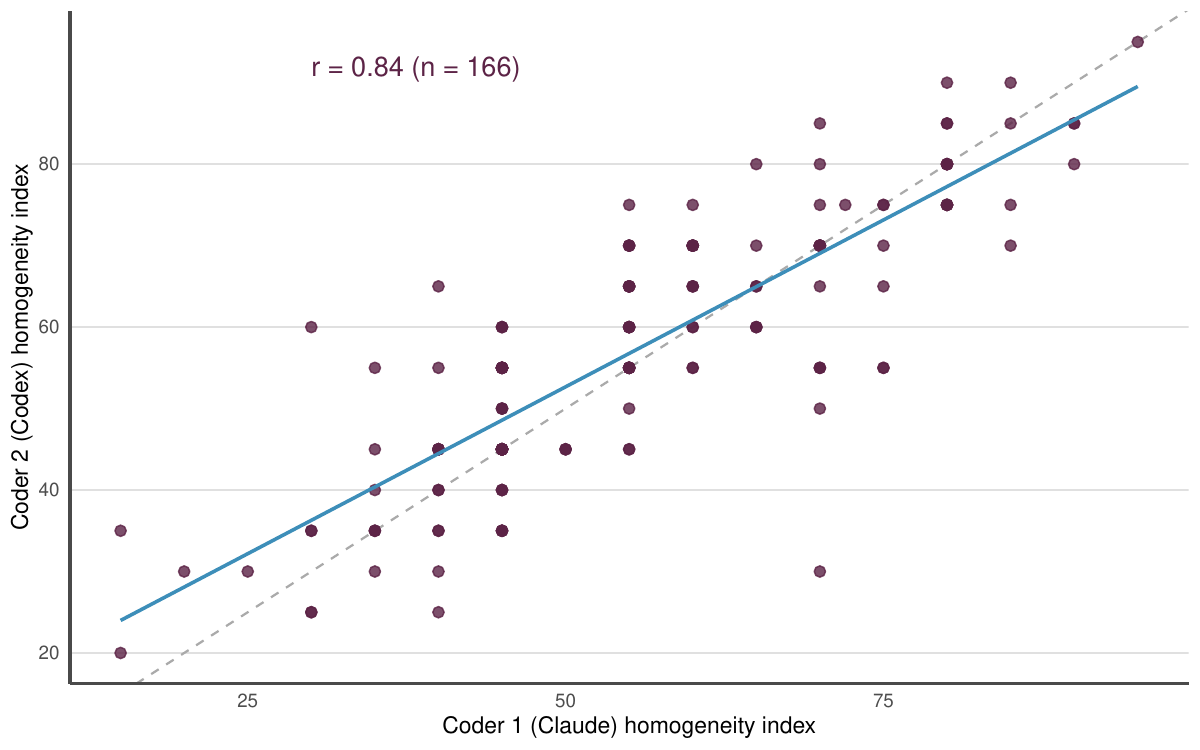}
\caption{Two language models from different families, coding brand homogeneity
independently from the same guide text, agree strongly ($r = 0.84$, $n = 166$
economies). Coder 1 is Claude (Opus); coder 2 is OpenAI Codex (gpt-5.5).}
\label{fig:reliability}
\end{figure}

\section{A within-country illustration of the contamination ordering}\label{sec:results}

Section~\ref{sec:measurement} showed that the labels are reliable and that their
iconicity and non-reproducibility component passes a partial external check.
Testing the model's conduct-contamination prediction (P2) is harder, because its
closest premium implication is inherently within-brand. Conditional on comparable
mean conduct and weights, a more homogeneous, better-maintained set has higher $Q$
and therefore higher $R(Q)$; relating this to a higher umbrella premium additionally
assumes that premiums increase with reputation. A
cross-country comparison cannot isolate it. Regressing log international tourism
receipts per arrival on the homogeneity index, log GDP per capita and region
fixed effects across the 164 economies with the required data gives a near-zero estimate
($\hat\beta = 0.0004$, $p = 0.94$, HC1 standard errors). The two economies lost from
the coded sample of 166 are the British Virgin Islands and Eritrea, for which the World
Bank reports no GDP per capita; both have the receipts and arrivals data. Receipts per arrival
confound price, length of stay, distance to source markets and reason for visit,
so the cross-country level cannot identify a within-brand mechanism. We therefore
illustrate the conditional implication at the level where it applies -- within one country --
using the four French wine regions of \citet{schatt2025collective}.

The theoretical construct is quality-conduct homogeneity, whereas the text index
measures thematic concentration and is only a proxy. If that proxy nevertheless
contains information about the coherence of the shared signal, it should rank
Bordeaux, the most heterogeneous of the four regions in the underlying study, low,
and it does. We code the four regions from their guide text with the same two-coder
procedure. The combined homogeneity index has a Spearman rank correlation
of $0.80$ with the estimated collective-reputation premiums on economic margin
(Table~\ref{tab:wine}). With $n=4$, the exact two-sided permutation $p$-value is
$0.33$, so this is not a powered test. It is an external illustration of a
conditional ordering, not a test of a model-derived premium sign. The clearest
result is at the end of the ranking: Bordeaux is the least homogeneous
region in both coders' scores and the only region with a negative estimated
umbrella premium.

\begin{table}[htbp]
\centering
\caption{Homogeneity scores for the four \citet{schatt2025collective} regions and
their estimated umbrella premiums. The mean homogeneity score has Spearman
$\rho = 0.80$ with the premium, but with $n = 4$ the exact two-sided permutation
$p$-value is $0.33$. Premium is their estimated coefficient on economic margin
(percentage points).}
\label{tab:wine}
\begin{tabular}{lcccc}
\toprule
Region & Claude $H$ & Codex $H$ & Mean $H$ & Estimated premium (pp) \\
\midrule
Champagne   & 88 & 90 & 89.0 & $+3.92$ \\
Burgundy    & 80 & 35 & 57.5 & $+4.19$ \\
Rh\^one     & 52 & 40 & 46.0 & $+0.36$ \\
Bordeaux    & 40 & 25 & 32.5 & $-2.72$ \\
\bottomrule
\end{tabular}
\end{table}

The largest discrepancy across coders indicates one limit of the measure: the coders disagree
most on Burgundy, which is uniformly prestigious but fragmented across small
producers, and fragmentation is not the construct. We use the unrefined
measure to avoid fitting it to the result. The exercise is an illustration, not
a validation of the instrument.

Extending this beyond four regions defines what a serious test of the premium
prediction would require. Such a test would code many regions within a single
country and relate their homogeneity to a demand-based outcome -- hotel or booking
revenue, or occupancy -- rather than to a published reputation premium. Such
premiums exist for only a small number of regions. A broader panel of collective
food-and-drink brands (geographical indications) is a second approach; we assembled
one such panel as a scoping exercise, and its materials are retained in the
replication package for researchers who wish to extend the measurement. The
contribution here is the method and the theory it measures, not a large-sample
conclusion.

\section{Implications for destination management and policy}\label{sec:policy}

The framework gives destination managers a three-stage decision protocol. First,
classify assets by the margin on which they enter demand: participation salience
and low reproducibility place an asset in the gate, while modular itinerary value
places it in the menu. Second, monitor gate quality and conduct to identify which
element is binding or deteriorating; visitor counts alone can miss the
cross-regional value at risk. Third, compare the welfare wedge in
\eqref{eq:wedge} with intervention cost and institutional feasibility, and assign
coordination to the lowest level of governance capable of internalizing the
spillover. This sequence does not require destination managers to place a monetary
value on every menu attraction before acting.

Three more specific implications follow.

First, under the conditions of Proposition~\ref{prop:commons} and holding the other
qualities and conduct fixed, conservation of a gate asset is weakly undersupplied.
The first-best
object is the domestic welfare wedge $\tau_k^{W}$ in \eqref{eq:wedge}, not gross
tourism receipts. A conservation subsidy, development tax or club-transfer schedule
should have marginal slope equal to the domestic surplus that degradation would cost
other regions, adjusted for financing costs and incidence; a per-arrival revenue
share must be converted using the local arrival derivative. The model therefore
supports targeted assessment of anchors and essential inputs rather than an
automatic uniform allocation across attractions; actual spending priorities must
also reflect the cost and tractability of improving each gate element. Tourism
taxes are increasingly analyzed as corrective instruments of this kind
\citep{rosellonadal2026taxation}.

Second, Propositions~\ref{prop:weakerlink} and~\ref{prop:commons} direct diagnostic attention to the
lowest weighted-quality anchor or the input most vulnerable to degradation, rather than
automatically to the most visible site. Whether that asset receives the
highest-return spending depends additionally on intervention costs. This is an argument for
coordination, not automatically for centralization. A national brand authority, an
appellation-style club or a fiscal transfer system can each implement the wedge
only if it has enough information and discipline to target the binding constraint
rather than the politically visible one. Each also requires that capture does not
prevent enforcement against the relevant custodian.

Third, the reputational gate requires the standard collective-reputation
instruments \citep{nie2024collective, winfreemccluskey2005}: minimum quality
standards and monitoring at the anchors, and credible enforcement against low-conduct
members whose behavior, through $Q$ and Lemma~\ref{lem:concave}, damages the
umbrella for everyone. The institutional constraint is credibility: a captured
regulator will not discipline the connected custodian who binds the weak link. Anti-counterfeiting
and quality enforcement, which
\citet{schatt2025collective} list among their policy levers, are in this model
investments in the shared gate.

\section{Conclusion}\label{sec:conclusion}

This paper asked two questions: when does a collective tourism brand protect
its members, and what is a single attraction worth within it? Both answers
follow from one structure. A collective tourism brand has two margins. It is a
gate of scarce,
non-reproducible anchors and essential inputs that act directly on participation,
and it is a menu of reproducible attractions that primarily governs time allocation
and reaches participation through an inclusive-value share. Because the gate is
weaker-link and multiplies the menu, a
negative shock to one gate asset affects the whole brand. Because the menu aggregates
many modular attractions, one more menu attraction is a small share of the brand.

Three further results give the two answers. On protection: the reputation attached to
the gate obeys the same weakest-member logic as its physical quality, so one
low-conduct member reduces the shared reputation for everyone, disproportionately
so when the brand's standing is already low. On valuation: conditional on fixed
conduct, Aumann--Shapley accounting gives each physical asset a value share that
does not require a discretionary restoration floor --
non-vanishing for a binding gate anchor, small for any one menu attraction. And on why
protection fails: the gate is a commons, because the custodian of a gate asset
does not capture the domestic surplus its maintenance creates for other regions.
Under the conditions in Proposition~\ref{prop:commons}, its conditional physical-quality
choice is therefore weakly below the internalizing choice.

The measurement exercise is deliberately modest and is offered as a method, not a
conclusion. Two language-model coders produce similar homogeneity scores for 166
economies, and the attractions they jointly label anchors appear on the UNESCO World
Heritage List at several times the odds of menu attractions, so the labels are
reliable and pass a partial external check on iconicity and non-reproducibility.
The contamination ordering is inherently within-brand
-- cross-country receipts per arrival are too confounded to identify it -- and we
illustrate a conditional premium implication on four French wine regions, where the
index places Bordeaux at the lowest position. This is not a powered test or a
model-derived premium sign. The main contribution
is the framework: it identifies which assets determine the shared brand, how their
value shares should be accounted for, and the conditions under which decentralized
custodians weakly under-maintain physical quality. For destination management, it
separates three decisions that are often combined: which assets belong in the
participation gate, which gate element currently constrains the brand, and which
institution can maintain it at least cost. That separation helps policy target
cross-regional brand value rather than visibility alone.

\bibliographystyle{plainnat}
\bibliography{references}

@article{kremer1993oring,
  author  = {Kremer, Michael},
  title   = {The {O}-Ring Theory of Economic Development},
  journal = {Quarterly Journal of Economics},
  year    = {1993},
  volume  = {108},
  number  = {3},
  pages   = {551--575}
}

@techreport{ludwig2025llm,
  author      = {Ludwig, Jens and Mullainathan, Sendhil and Rambachan, Ashesh},
  title       = {Large Language Models: An Applied Econometric Framework},
  institution = {National Bureau of Economic Research},
  type        = {NBER Working Paper},
  number      = {33344},
  year        = {2025},
  note        = {Forthcoming, Annual Review of Economics}
}

@article{tirole1996collective,
  author  = {Tirole, Jean},
  title   = {A Theory of Collective Reputations (with Applications to the Persistence of Corruption and to Firm Quality)},
  journal = {Review of Economic Studies},
  year    = {1996},
  volume  = {63},
  number  = {1},
  pages   = {1--22}
}

@article{schatt2025collective,
  author  = {Schatt, Alain and Weisskopf, Jean-Philippe},
  title   = {Unveiling the Collective Reputation Effect of {F}rench Wines},
  journal = {Tourism Economics},
  year    = {2025},
  volume  = {31},
  number  = {6},
  pages   = {1140--1159},
  doi     = {10.1177/13548166251332340}
}

@article{nie2024collective,
  author  = {Nie, Kun-xi},
  title   = {Collective Reputation of Tourism Attraction: A Case Study of {L}ijiang City in {C}hina},
  journal = {Tourism Planning and Development},
  year    = {2024},
  volume  = {21},
  number  = {3},
  pages   = {363--373},
  doi     = {10.1080/21568316.2022.2089222}
}

@article{corneshartley2007,
  author  = {Cornes, Richard and Hartley, Roger},
  title   = {Weak Links, Good Shots and Other Public Good Games: Building on {BBV}},
  journal = {Journal of Public Economics},
  year    = {2007},
  volume  = {91},
  number  = {9},
  pages   = {1684--1707}
}

@article{winfreemccluskey2005,
  author  = {Winfree, Jason A. and McCluskey, Jill J.},
  title   = {Collective Reputation and Quality},
  journal = {American Journal of Agricultural Economics},
  year    = {2005},
  volume  = {87},
  number  = {1},
  pages   = {206--213}
}

@article{nockestrausz2023,
  author  = {Nocke, Volker and Strausz, Roland},
  title   = {Collective Brand Reputation},
  journal = {Journal of Political Economy},
  year    = {2023},
  volume  = {131},
  number  = {1},
  pages   = {1--58}
}

@article{costanigro2010nested,
  author  = {Costanigro, Marco and McCluskey, Jill J. and Goemans, Christopher},
  title   = {The Economics of Nested Names: Name Specificity, Reputations, and Price Premia},
  journal = {American Journal of Agricultural Economics},
  year    = {2010},
  volume  = {92},
  number  = {5},
  pages   = {1339--1350}
}

@article{castriotadelmastro2015,
  author  = {Castriota, Stefano and Delmastro, Marco},
  title   = {The Economics of Collective Reputation: Evidence from the Wine Industry},
  journal = {American Journal of Agricultural Economics},
  year    = {2015},
  volume  = {97},
  number  = {2},
  pages   = {469--489}
}

@article{shapiro1983premiums,
  author  = {Shapiro, Carl},
  title   = {Premiums for High Quality Products as Returns to Reputations},
  journal = {Quarterly Journal of Economics},
  year    = {1983},
  volume  = {98},
  number  = {4},
  pages   = {659--680}
}

@article{rosen1981superstars,
  author  = {Rosen, Sherwin},
  title   = {The Economics of Superstars},
  journal = {American Economic Review},
  year    = {1981},
  volume  = {71},
  number  = {5},
  pages   = {845--858}
}

@article{mussarosen1978,
  author  = {Mussa, Michael and Rosen, Sherwin},
  title   = {Monopoly and Product Quality},
  journal = {Journal of Economic Theory},
  year    = {1978},
  volume  = {18},
  number  = {2},
  pages   = {301--317}
}

@article{candelafigini2010,
  author  = {Candela, Guido and Figini, Paolo},
  title   = {Destination Unknown. Is There Any Economics Beyond Tourism Areas?},
  journal = {Review of Economic Analysis},
  year    = {2010},
  volume  = {2},
  number  = {3},
  pages   = {256--271}
}

@book{candelafigini2012,
  author    = {Candela, Guido and Figini, Paolo},
  title     = {The Economics of Tourism Destinations},
  publisher = {Springer},
  address   = {Berlin and Heidelberg},
  year      = {2012}
}

@article{gergaud2017netbenefits,
  author  = {Gergaud, Olivier and Livat, Florine and Rickard, Bradley and Warzynski, Frederic},
  title   = {Evaluating the Net Benefits of Collective Reputation: The Case of {B}ordeaux Wine},
  journal = {Food Policy},
  year    = {2017},
  volume  = {71},
  pages   = {8--16}
}

@article{hirshleifer1983weakest,
  author  = {Hirshleifer, Jack},
  title   = {From Weakest-Link to Best-Shot: The Voluntary Provision of Public Goods},
  journal = {Public Choice},
  year    = {1983},
  volume  = {41},
  number  = {3},
  pages   = {371--386}
}

@article{cornes1993dyke,
  author  = {Cornes, Richard},
  title   = {Dyke Maintenance and Other Stories: Some Neglected Types of Public Goods},
  journal = {Quarterly Journal of Economics},
  year    = {1993},
  volume  = {108},
  number  = {1},
  pages   = {259--271}
}

@book{cornes1996theory,
  author    = {Cornes, Richard and Sandler, Todd},
  title     = {The Theory of Externalities, Public Goods, and Club Goods},
  edition   = {2},
  publisher = {Cambridge University Press},
  address   = {Cambridge},
  year      = {1996}
}

@article{fabergaubert2019,
  author  = {Faber, Benjamin and Gaubert, C{\'e}cile},
  title   = {Tourism and Economic Development: Evidence from {M}exico's Coastline},
  journal = {American Economic Review},
  year    = {2019},
  volume  = {109},
  number  = {6},
  pages   = {2245--2293}
}

@article{magontier2024coastal,
  author  = {Magontier, Pierre and Sol{\'e}-Oll{\'e}, Albert and Viladecans-Marsal, Elisabet},
  title   = {The Political Economy of Coastal Development},
  journal = {Journal of Public Economics},
  year    = {2024},
  volume  = {238},
  pages   = {105178}
}

@article{balboni2025harm,
  author  = {Balboni, Clare},
  title   = {In Harm's Way? Infrastructure Investments and the Persistence of Coastal Cities},
  journal = {American Economic Review},
  year    = {2025},
  volume  = {115},
  number  = {1},
  pages   = {77--116}
}

@article{mukanjari2021valuation,
  author  = {Mukanjari, Samson and Ntuli, Herbert and Muchapondwa, Edwin},
  title   = {Valuation of Nature-Based Tourism Using Contingent Valuation Survey: Evidence from {S}outh {A}frica},
  journal = {Journal of Environmental Economics and Policy},
  year    = {2021},
  doi     = {10.1080/21606544.2021.2010604}
}

@book{aumannshapley1974,
  author    = {Aumann, Robert J. and Shapley, Lloyd S.},
  title     = {Values of Non-Atomic Games},
  publisher = {Princeton University Press},
  address   = {Princeton},
  year      = {1974}
}

@article{dixitstiglitz1977,
  author  = {Dixit, Avinash K. and Stiglitz, Joseph E.},
  title   = {Monopolistic Competition and Optimum Product Diversity},
  journal = {American Economic Review},
  year    = {1977},
  volume  = {67},
  number  = {3},
  pages   = {297--308}
}

@article{wernerfelt1988umbrella,
  author  = {Wernerfelt, Birger},
  title   = {Umbrella Branding as a Signal of New Product Quality: An Example of Signalling by Posting a Bond},
  journal = {RAND Journal of Economics},
  year    = {1988},
  volume  = {19},
  number  = {3},
  pages   = {458--466}
}

@article{sandmo1975externalities,
  author  = {Sandmo, Agnar},
  title   = {Optimal Taxation in the Presence of Externalities},
  journal = {Swedish Journal of Economics},
  year    = {1975},
  volume  = {77},
  number  = {1},
  pages   = {86--98}
}

@article{atkinsonstern1974,
  author  = {Atkinson, Anthony B. and Stern, Nicholas H.},
  title   = {Pigou, Taxation and Public Goods},
  journal = {Review of Economic Studies},
  year    = {1974},
  volume  = {41},
  number  = {1},
  pages   = {119--128}
}

@article{ballardfullerton1992,
  author  = {Ballard, Charles L. and Fullerton, Don},
  title   = {Distortionary Taxes and the Provision of Public Goods},
  journal = {Journal of Economic Perspectives},
  year    = {1992},
  volume  = {6},
  number  = {3},
  pages   = {117--131}
}

@article{diamond1973externalities,
  author  = {Diamond, Peter A.},
  title   = {Consumption Externalities and Imperfect Corrective Pricing},
  journal = {The Bell Journal of Economics and Management Science},
  year    = {1973},
  volume  = {4},
  number  = {2},
  pages   = {526--538}
}

@article{clarke1971multipart,
  author  = {Clarke, Edward H.},
  title   = {Multipart Pricing of Public Goods},
  journal = {Public Choice},
  year    = {1971},
  volume  = {11},
  pages   = {17--33}
}

@article{grovesledyard1977,
  author  = {Groves, Theodore and Ledyard, John},
  title   = {Optimal Allocation of Public Goods: A Solution to the ``Free Rider'' Problem},
  journal = {Econometrica},
  year    = {1977},
  volume  = {45},
  number  = {4},
  pages   = {783--809}
}

@book{ostrom1990governing,
  author    = {Ostrom, Elinor},
  title     = {Governing the Commons: The Evolution of Institutions for Collective Action},
  publisher = {Cambridge University Press},
  address   = {Cambridge},
  year      = {1990}
}

@article{anholt2007brand,
  author  = {Anholt, Simon},
  title   = {Competitive Identity: The New Brand Management for Nations, Cities and Regions},
  journal = {Journal of Brand Management},
  year    = {2007},
  volume  = {14},
  number  = {6},
  pages   = {474--475}
}

@article{krutilla1967conservation,
  author  = {Krutilla, John V.},
  title   = {Conservation Reconsidered},
  journal = {American Economic Review},
  year    = {1967},
  volume  = {57},
  number  = {4},
  pages   = {777--786}
}

@book{maler1974environmental,
  author    = {M{\"a}ler, Karl-G{\"o}ran},
  title     = {Environmental Economics: A Theoretical Inquiry},
  publisher = {Johns Hopkins University Press for Resources for the Future},
  address   = {Baltimore},
  year      = {1974}
}

@book{freeman2014measurement,
  author    = {Freeman, A. Myrick, III and Herriges, Joseph A. and Kling, Catherine L.},
  title     = {The Measurement of Environmental and Resource Values: Theory and Methods},
  edition   = {3},
  publisher = {RFF Press / Routledge},
  address   = {New York},
  year      = {2014}
}

@incollection{carsonhanemann2005cv,
  author    = {Carson, Richard T. and Hanemann, W. Michael},
  title     = {Contingent Valuation},
  booktitle = {Handbook of Environmental Economics, Volume 2},
  editor    = {M{\"a}ler, Karl-G{\"o}ran and Vincent, Jeffrey R.},
  publisher = {Elsevier},
  address   = {Amsterdam},
  year      = {2005},
  pages     = {821--936}
}

@incollection{phaneufsmith2005recreation,
  author    = {Phaneuf, Daniel J. and Smith, V. Kerry},
  title     = {Recreation Demand Models},
  booktitle = {Handbook of Environmental Economics, Volume 2},
  editor    = {M{\"a}ler, Karl-G{\"o}ran and Vincent, Jeffrey R.},
  publisher = {Elsevier},
  address   = {Amsterdam},
  year      = {2005},
  pages     = {671--761}
}

@article{adamowicz1994combining,
  author  = {Adamowicz, Wiktor and Louviere, Jordan and Williams, Michael},
  title   = {Combining Revealed and Stated Preference Methods for Valuing Environmental Amenities},
  journal = {Journal of Environmental Economics and Management},
  year    = {1994},
  volume  = {26},
  number  = {3},
  pages   = {271--292}
}

@article{costanza1997value,
  author  = {Costanza, Robert and d'Arge, Ralph and de Groot, Rudolf and Farber, Stephen and Grasso, Monica and Hannon, Bruce and Limburg, Karin and Naeem, Shahid and O'Neill, Robert V. and Paruelo, Jose and Raskin, Robert G. and Sutton, Paul and van den Belt, Marjan},
  title   = {The Value of the World's Ecosystem Services and Natural Capital},
  journal = {Nature},
  year    = {1997},
  volume  = {387},
  pages   = {253--260}
}

@article{herriges2004whats,
  author  = {Herriges, Joseph A. and Kling, Catherine L. and Phaneuf, Daniel J.},
  title   = {What's the Use? Welfare Estimates from Revealed Preference Models When Weak Complementarity Does Not Hold},
  journal = {Journal of Environmental Economics and Management},
  year    = {2004},
  volume  = {47},
  number  = {1},
  pages   = {55--70}
}

@article{naidoo2005biodiversity,
  author  = {Naidoo, Robin and Adamowicz, Wiktor L.},
  title   = {Biodiversity and Nature-Based Tourism at Forest Reserves in {U}ganda},
  journal = {Environment and Development Economics},
  year    = {2005},
  volume  = {10},
  number  = {2},
  pages   = {159--178}
}

@article{konishisandfort2003,
  author  = {Konishi, Hideo and Sandfort, Michael T.},
  title   = {Anchor Stores},
  journal = {Journal of Urban Economics},
  year    = {2003},
  volume  = {53},
  number  = {3},
  pages   = {413--435},
  doi     = {10.1016/S0094-1190(03)00002-0}
}

@article{gouldpashigian2005,
  author  = {Gould, Eric D. and Pashigian, B. Peter and Prendergast, Canice J.},
  title   = {Contracts, Externalities, and Incentives in Shopping Malls},
  journal = {Review of Economics and Statistics},
  year    = {2005},
  volume  = {87},
  number  = {3},
  pages   = {411--422},
  doi     = {10.1162/0034653054638355}
}

@article{cabral2009umbrella,
  author  = {Cabral, Lu{\'\i}s M. B.},
  title   = {Umbrella Branding with Imperfect Observability and Moral Hazard},
  journal = {International Journal of Industrial Organization},
  year    = {2009},
  volume  = {27},
  number  = {2},
  pages   = {206--213},
  doi     = {10.1016/j.ijindorg.2008.07.002}
}

@article{gentzkow2019text,
  author  = {Gentzkow, Matthew and Kelly, Bryan and Taddy, Matt},
  title   = {Text as Data},
  journal = {Journal of Economic Literature},
  year    = {2019},
  volume  = {57},
  number  = {3},
  pages   = {535--574},
  doi     = {10.1257/jel.20181020}
}

@article{pintassilgosilva2007,
  author  = {Pintassilgo, Pedro and Silva, Jo{\~a}o Albino},
  title   = {`Tragedy of the Commons' in the Tourism Accommodation Industry},
  journal = {Tourism Economics},
  year    = {2007},
  volume  = {13},
  number  = {2},
  pages   = {209--224},
  doi     = {10.5367/000000007780823168}
}

@article{crouchritchie1999,
  author  = {Crouch, Geoffrey I. and Ritchie, J. R. Brent},
  title   = {Tourism, Competitiveness, and Societal Prosperity},
  journal = {Journal of Business Research},
  year    = {1999},
  volume  = {44},
  number  = {3},
  pages   = {137--152},
  doi     = {10.1016/S0148-2963(97)00196-3}
}

@article{bacchiega2024dilution,
  author  = {Bacchiega, Emanuele and Colucci, Mariachiara and Denicol{\`o}, Vincenzo and Magnani, Marco},
  title   = {Only the Ugly Face? A Theoretical Model of Brand Dilution},
  journal = {Management Science},
  year    = {2024},
  volume  = {70},
  number  = {5},
  pages   = {3182--3199},
  doi     = {10.1287/mnsc.2022.00852}
}

@article{choi2010heritage,
  author  = {Choi, Andy S. and Ritchie, Brent W. and Papandrea, Franco and Bennett, Jeff},
  title   = {Economic Valuation of Cultural Heritage Sites: A Choice Modeling Approach},
  journal = {Tourism Management},
  year    = {2010},
  volume  = {31},
  number  = {2},
  pages   = {213--220},
  doi     = {10.1016/j.tourman.2009.02.014}
}

@article{dell2025deep,
  author  = {Dell, Melissa},
  title   = {Deep Learning for Economists},
  journal = {Journal of Economic Literature},
  year    = {2025},
  volume  = {63},
  number  = {1},
  pages   = {5--58},
  doi     = {10.1257/jel.20241733}
}

@article{rosellonadal2026taxation,
  author  = {Rossell{\'o}-Nadal, Jaume and Sard, Mar{\'\i}a},
  title   = {Tourism Taxation: Balancing Revenues, Competitiveness and Sustainability in Destination Management},
  journal = {Tourism Management},
  year    = {2026},
  volume  = {113},
  pages   = {105326},
  doi     = {10.1016/j.tourman.2025.105326}
}

@article{jiao2025spillover,
  author  = {Jiao, Xiaoying and Chen, Jason Li and Li, Gang and Liu, Tongxiang},
  title   = {From Tourism Demand to Destination Competitiveness: A Spatial Spillover Perspective},
  journal = {Tourism Management},
  year    = {2025},
  volume  = {111},
  pages   = {105236},
  doi     = {10.1016/j.tourman.2025.105236}
}

@article{kotlergertner2002,
  author  = {Kotler, Philip and Gertner, David},
  title   = {Country as Brand, Product, and Beyond: A Place Marketing and Brand Management Perspective},
  journal = {Journal of Brand Management},
  year    = {2002},
  volume  = {9},
  number  = {4--5},
  pages   = {249--261},
  doi     = {10.1057/palgrave.bm.2540076}
}

@article{swain2024place,
  author  = {Swain, Swapnarag and Jebarajakirthy, Charles and Sharma, Bhuvanesh Kumar and Maseeh, Haroon Iqbal and Agrawal, Amee and Shah, Jinal and Saha, Raiswa},
  title   = {Place Branding: A Systematic Literature Review and Future Research Agenda},
  journal = {Journal of Travel Research},
  year    = {2024},
  volume  = {63},
  number  = {3},
  pages   = {535--564},
  doi     = {10.1177/00472875231168620}
}

@article{chisik2003reputational,
  author  = {Chisik, Richard},
  title   = {Export Industry Policy and Reputational Comparative Advantage},
  journal = {Journal of International Economics},
  year    = {2003},
  volume  = {59},
  number  = {2},
  pages   = {423--451},
  doi     = {10.1016/S0022-1996(02)00020-X}
}

@article{choi1998brand,
  author  = {Choi, Jay Pil},
  title   = {Brand Extension as Informational Leverage},
  journal = {Review of Economic Studies},
  year    = {1998},
  volume  = {65},
  number  = {4},
  pages   = {655--669},
  doi     = {10.1111/1467-937X.00063}
}

\newpage
\appendix

\section*{Appendix: Measurement reproducibility}
\addcontentsline{toc}{section}{Appendix: Measurement reproducibility}

\paragraph{Sample rule and registry.} The country sample is defined by a rule,
not a list: World Bank tourism receipts and arrivals non-missing for at least
one year in 2015--2019, an existing Wikivoyage page, and at least 1,500
characters of combined guide text across the three sources. The registry file
(\texttt{country\_registry.csv}) retains every candidate World Bank economy
with its probe results and, where excluded, the exclusion reason.

\paragraph{Coder 1 (Claude).} Coder 1 runs as scripted batch calls to the
Anthropic API. The script is \texttt{15\_classify\_claude.py}: model pinned to
\texttt{claude-opus-4-8}, one request per country, extended
reasoning disabled and the response constrained to a fixed JSON schema. Every
call is logged in \texttt{coder1\_log.jsonl} with the batch and request
identifiers, a timestamp, the exact model identifier echoed by the interface,
SHA-256 hashes of the rendered prompt and of the rubric file, token counts
and the verbatim response. Current hosted models do not accept a sampling
temperature and do not guarantee byte-identical outputs, so instead of a
determinism claim we quantify run-to-run variation directly: a duplicate run
on a random twenty-country subsample is reported alongside the main results.
The original 42-country coder-1 scores (\texttt{coder1\_claude.csv}), which
were produced interactively without logged settings, are retained as an
archival file; the scripted coder reproduces them at $r = 0.93$ on the
overlapping economies, and all headline numbers use the scripted run.

\paragraph{Coder 2 (Codex).} Coder 2 runs through the OpenAI Codex
command-line interface in chunks of about fifteen countries
(\texttt{16\_classify\_codex\_full.py}), model \texttt{gpt-5.5}. The rubric is
repeated in full in every chunk; chunk membership is randomized with a fixed
seed and recorded, with the rubric's SHA-256 hash and the run parameters, in a
build manifest; every generated prompt and raw transcript is retained; and the
parser refuses to write output unless every registry country appears exactly
once across chunks. The audit trail for the full 166-economy run is thus
complete in its inputs -- prompts, rubric hash and chunk assignment -- but the
run pins less of the model environment than coder 1 does: the Codex
command-line version and any model-side sampling settings for the full run were
not separately logged, and the interface exposes no temperature or
reasoning-effort control to pin. We therefore treat coder 2 as a second,
imperfectly reproducible reading and rely on cross-coder reliability rather than
on either coder's byte-level determinism. The original 42-country transcript
(\texttt{codex\_raw.txt}, Codex CLI v0.142.2) is retained as an archival file.

\paragraph{Rubrics and prompts.} The canonical rubric (\texttt{rubric.md},
duplicated as \texttt{rubrics/rubric\_A.md}) and the two rephrased variants used
in the prompt-robustness check (\texttt{rubric\_B.md}, \texttt{rubric\_C.md})
are frozen in the replication package, together with every generated prompt, raw
transcript and the deterministic analysis scripts. The scoping materials for the
collective-brand (geographical-indication) extension noted in
Section~\ref{sec:results} -- \texttt{gi\_premia.csv}, its rubric variant and the
analysis script -- are retained there too, for researchers who wish to build on
them.

\section*{Appendix: Proofs}
\addcontentsline{toc}{section}{Appendix: Proofs}

\paragraph{Notation.} $\gamma_k = \phi_k g_k^{\eta}/\sum_l \phi_l g_l^{\eta}$ is the
gate share, with $\partial \ln G/\partial \ln g_k = \gamma_k$; $s_i = w_i
a_i^{\rho}/\sum_l w_l a_l^{\rho}$ is the menu share, with $\partial \ln A/\partial
\ln a_i = s_i$; $\psi = \partial \ln D/\partial \ln V_R$ once reputation is included.

\begin{proof}[Proof of Proposition~\ref{prop:asymmetry}]
Since $V = G(g)A(a)$ with $G$ depending only on $g$ and $A$ only on $a$,
$\partial \ln V/\partial \ln g_k = \partial \ln G/\partial \ln g_k = \gamma_k$ and
$\partial \ln V/\partial \ln a_i = \partial \ln A/\partial \ln a_i = s_i$, which is
\eqref{eq:elasticities}. (i) For the weakest gate asset, $(g_l/g_k)^{\eta} \le 1$
for every $l$ (since $g_l \ge g_k$ and $\eta < 0$), so $\gamma_k =
\phi_k/\sum_l \phi_l (g_l/g_k)^{\eta} \ge \phi_k > 0$, a bound independent of the
menu; as $\eta \to -\infty$, $\gamma_k \to 1$ for a unique binding asset
(Proposition \ref{prop:weakerlink}). Under menu regularity, $c \le w_i a_i^{\rho}
\le C$ for constants $0 < c \le C$, so $s_i \le C/(|M|c) = O(1/|M|)$. Hence
$\gamma_k/s_i \ge \phi_k |M| c/C \to \infty$ as the menu grows. For a fixed finite
proportional degradation $\lambda\in(0,1)$, the exact ratios are
\[
 \frac{G(\lambda g_k,g_{-k})}{G(g)}
 =\left[1+\gamma_k(\lambda^\eta-1)\right]^{1/\eta},\qquad
 \frac{A(\lambda a_i,a_{-i})}{A(a)}
 =\left[1+s_i(\lambda^\rho-1)\right]^{1/\rho}.
\]
The menu loss tends to zero as $s_i\to0$, while the weakest gate loss remains
positive for fixed $\gamma_k\ge\phi_k>0$. Thus the same asymptotic ordering holds,
although the finite loss ratio is not exactly $\gamma_k/s_i$. (ii) The cross
partials are
$\partial^2 V/\partial g_k\partial a_i = (\partial G/\partial g_k)(\partial A/
\partial a_i) > 0$ for all $i$ since $\partial G/\partial g_k = G\gamma_k/g_k > 0$
and $\partial A/\partial a_i = As_i/a_i > 0$; and, for $l \ne k$, $\partial^2 V/\partial g_k\partial
g_l = A\,\partial^2 G/\partial g_k\partial g_l > 0$ by the CES cross derivative
$\partial^2 G/\partial g_k\partial g_l = (1-\eta)G\gamma_k\gamma_l/(g_kg_l) > 0$
(since $\eta < 0 < 1$); see the proof of Proposition~\ref{prop:aggregate}. For the
scaling claims, work in elasticities: $\partial V/\partial a_j = G \cdot
\partial A/\partial a_j$, so $\partial \ln(\partial V/\partial a_j)/\partial \ln
g_k = \partial \ln G/\partial \ln g_k = \gamma_k$ for every $j \in M$, and for
$l \ne k$, $\partial \ln(\partial V/\partial g_l)/\partial \ln g_k =
(1-\eta)\gamma_k \ge \gamma_k$. In the other direction a change in $a_i$ leaves
$G$, the gate shares, and hence every elasticity $\partial \ln V/\partial \ln g_k
= \gamma_k$ unchanged; it moves the level $\partial V/\partial g_k = A\,\partial
G/\partial g_k$ only through the common scale factor $A$, with
$\partial \ln(\partial V/\partial g_k)/\partial \ln a_i = s_i$, which vanishes as
the menu grows. (iii) Write the
menu sum as $U_A = \sum_{i \in M} w_i a_i^{\rho} = A^{\rho}$. Adding an attraction
with weight $w$ and quality $a$ raises the menu to $\tilde A = (U_A + w
a^{\rho})^{1/\rho}$, so the proportional gain in destination value is
\[
  \frac{\Delta V}{V} = \frac{\tilde A - A}{A}
  = \left( 1 + \frac{w a^{\rho}}{U_A} \right)^{1/\rho} - 1
  \approx \frac{1}{\rho}\,\frac{w a^{\rho}}{U_A},
\]
using the first-order expansion $(1+\varepsilon)^{1/\rho} - 1 \approx
\varepsilon/\rho$ for small $\varepsilon = w a^{\rho}/U_A$. Under menu regularity
the incumbent terms satisfy $w_i a_i^{\rho} \ge c > 0$, so $U_A \ge |M|c \to
\infty$ as the menu grows; with the entrant's term $w a^{\rho} \le C$ bounded,
$w a^{\rho}/U_A \to 0$ and $\Delta V/V \to 0$. Note that the absolute increment $\tilde A - A$ need not
fall, since $A$ itself grows with the menu; it is the share of any one
attraction in brand value that vanishes, consistent with the menu-share elasticity
$s_i = O(1/|M|)$ in part~(i).
\end{proof}

\begin{proof}[Proof of Proposition~\ref{prop:weakerlink}]
From \eqref{eq:gateshare}, $\partial \ln \gamma_k/\partial \ln g_k = \eta(1 -
\gamma_k)$. With $|K| \ge 2$ every share satisfies $\gamma_k < 1$, so for $\eta < 0$
this is strictly negative and $\gamma_k$ is strictly decreasing in $g_k$. The
ranking claim follows from $\gamma_k/\gamma_l = (\phi_k/\phi_l)(g_k/g_l)^{\eta}$:
shares order assets by weighted quality $\phi_k g_k^{\eta}$, and with equal
weights $(g_k/g_l)^{\eta} > 1$ iff $g_k < g_l$ (as $\eta<0$), so the smallest
$g_k$ has the largest share. As $\eta \to -\infty$, let $m = \min_l g_l$
and $I = \{l : g_l = m\}$, and divide numerator and denominator of $\gamma_k =
\phi_k (g_k/m)^{\eta}/\sum_l \phi_l (g_l/m)^{\eta}$ by $m^{\eta}$: each term
$(g_l/m)^{\eta} \to 0$ for $g_l > m$ (since $g_l/m > 1$ and $\eta \to -\infty$) and
$(g_l/m)^{\eta} = 1$ for $l \in I$. Hence $\gamma_k \to \phi_k/\sum_{l \in I}\phi_l$
for $k \in I$ (which is $1$ when $I = \{k\}$) and $\gamma_k \to 0$ for $k \notin I$,
while $G = m[\sum_l \phi_l (g_l/m)^{\eta}]^{1/\eta} \to m$, because $\sum_l \phi_l
(g_l/m)^{\eta} \to \sum_{l \in I}\phi_l \in (0,1]$ and any positive constant raised
to the power $1/\eta \to 0$ tends to $1$; thus $G \to \min_l g_l$. The menu set $M$
does not enter $G$ by construction.
\end{proof}

\begin{proof}[Proof of Lemma~\ref{lem:concave}]
Write $m = 1 - x \in (0,1)$, so $R(Q) = Q/[m + (1-m)Q]$. Then
$R'(Q) = m/[m + (1-m)Q]^2 > 0$ and
$R''(Q) = -2m(1-m)/[m + (1-m)Q]^3 < 0$, so $R$ is strictly increasing and strictly
concave. $R(0) = 0$ and $R(1) = 1$. Evaluating, $R'(0) = 1/m = 1/(1-x)$ and
$R'(1) = m = 1-x$. Expressed in $x$,
$R'(Q) = (1-x)/[(1-x) + xQ]^2$.
Since $\partial Q/\partial q_k=Q\chi_k/q_k$,
\[
 \frac{\partial\ln R(Q(q))}{\partial\ln q_k}
 =\frac{QR'(Q)}{R(Q)}\chi_k
 =\frac{1-x}{(1-x)+xQ}\chi_k,
\]
which proves \eqref{eq:conductelasticity}. With equal $\omega_k$, the CES $Q$ is
symmetric and concave for $\eta<1$, hence Schur-concave; at a fixed arithmetic mean,
a mean-preserving spread weakly lowers $Q$ (strictly away from permutations).
\end{proof}

\begin{proof}[Proof of Proposition~\ref{prop:aggregate}]
Fix $q$. Then $\ln T_R = \ln \bar M + \ln D(V_R)$ and
$\partial\ln V_R/\partial\ln g_k=\partial\ln V/\partial\ln g_k=\gamma_k$, so
$\partial \ln T_R/\partial \ln g_k = \psi\gamma_k > 0$. The CES gate cross derivative
is, differentiating $\partial G/\partial g_k = U^{1/\eta - 1}\phi_k g_k^{\eta -
1}$ (where $G = U^{1/\eta}$, $U = \sum_l \phi_l g_l^{\eta}$) and using $\phi_k
g_k^{\eta - 1} = \gamma_k U/g_k$, $\partial^2 G/\partial g_k\partial g_l =
(1-\eta)G\gamma_k\gamma_l/(g_k g_l) > 0$ for $k \ne l$ and $\eta < 0$; hence, because
$R(Q)>0$ is fixed, $\partial^2 V_R/\partial g_k\partial g_l =R(Q)A\,\partial^2
G/\partial g_k\partial g_l > 0$ and $\partial^2 V_R/\partial g_k\partial a_i =
R(Q)(\partial G/\partial g_k)(\partial A/\partial a_i) > 0$. With
$D(V_R)=(V_R/\bar V)^{\psi}$, $V_R^{\psi} \propto G^{\psi}A^{\psi}$ conditional on
$q$, and on distinct gate coordinates
$k \ne l$,
$\partial^2 G^{\psi}/\partial g_k\partial g_l = \psi G^{\psi}\gamma_k\gamma_l/(g_k
g_l)[\psi - \eta] > 0$ since $\eta < 0 < \psi$ (same algebra as the CES power
computation), so $V_R^{\psi}$ is supermodular in $g$. Fix the floor vector $g^{f}$
of the proposition ($\underline g_k\le g^{f}_k \le g_k$), and let
$Y(S)=V_R(q,S,a)^{\psi}$ for
sets $S$ of gate assets, where assets in $S$ are at their observed levels, gate
assets outside $S$ are held at the floor $g^{f}$, and
  conduct and the menu are held fixed; define $w(S)=Y(S)-Y(\varnothing)$. For two disjoint
groups of gate assets $S_1$ and $S_2$, the profiles for $S_1$ and $S_2$ have join
equal to the profile for $S_1 \cup S_2$ and meet equal to the all-floor profile,
so supermodularity gives
$Y(S_1\cup S_2)+Y(\varnothing)\ge Y(S_1)+Y(S_2)$, hence
$w(S_1\cup S_2)\ge w(S_1)+w(S_2)$. Applying the same argument inductively gives the
normalized superadditivity statement for any finite partition of the gate.
\end{proof}

\begin{proof}[Proof of Proposition~\ref{prop:shapley}]
Fix $q$ and let $R_q=R(Q(q))>0$. Let $z(t)=(tg,ta)$ for $t\in(0,1]$. By homogeneity of the two CES aggregators,
$G(tg)=tG(g)$ and $A(ta)=tA(a)$, while the shares $\gamma_k$ and $s_i$ are unchanged
along the diagonal. For a gate asset $k$,
\[
  \frac{\partial V_R(q,z(t))}{\partial z_k}
  = R_q\frac{G(tg)\gamma_k}{t g_k}\,A(ta)
  = t\,\frac{\gamma_k V_R}{g_k}.
\]
Thus
\[
  \alpha_k^{AS}
  = \int_0^1 g_k \frac{\partial V_R(q,z(t))}{\partial z_k}\,dt
  = \gamma_k V_R\int_0^1 t\,dt
  = \frac{1}{2}\gamma_k V_R .
\]
The same calculation for a menu asset $i$ gives
\[
  \frac{\partial V_R(q,z(t))}{\partial z_i}
  = R_qG(tg)\frac{A(ta)s_i}{t a_i}
  = t\,\frac{s_i V_R}{a_i},
  \qquad
  \alpha_i^{AS}=\frac{1}{2}s_i V_R .
\]
The allocation exhausts value because
$\sum_{k\in K}\alpha_k^{AS}+\sum_{i\in M}\alpha_i^{AS}
=\frac12 V_R\sum_k\gamma_k+\frac12 V_R\sum_i s_i=V_R$.
The final sentence of the proposition follows from Proposition~\ref{prop:asymmetry}(i):
the weakest gate asset's share satisfies $\gamma_k \ge \phi_k > 0$ independently of
the menu, while under menu regularity $s_i = O(1/|M|)$.

For completeness, if $q$ is scaled along the same ray, then
$Q(tq)=tQ(q)$ and the physical gate and menu allocations are respectively
$\gamma_kGA\int_0^1tR(tQ)dt$ and
$s_iGA\int_0^1tR(tQ)dt$. Conduct coordinate $q_k$ receives
$\chi_kGAQ\int_0^1t^2R'(tQ)dt$. Summing over all coordinates and using
\[
 \int_0^1\!\left[2tR(tQ)+t^2QR'(tQ)\right]dt
 =\left[t^2R(tQ)\right]_{0}^{1}=R(Q)
\]
shows that the expanded allocation exhausts $R(Q)GA=V_R$.
\end{proof}

\begin{proof}[Proof of Proposition~\ref{prop:commons}]
The custodian's interior first-order condition from \eqref{eq:custodian} is
\[
  \mu_k\frac{\partial N_k}{\partial g_k}
  = B_k'(1-g_k) + c_k'(g_k),
\]
where $B_k'$ is the marginal development benefit from lowering quality. Holding the
same degradation benefit and maintenance cost terms fixed, the domestic welfare
first-order condition replaces the private captured-arrival term with the marginal
domestic surplus from all affected regions and the non-use term:
\[
  \sum_j S'_j(N_j)\frac{\partial N_j}{\partial g_k}
  + \frac{\partial \zeta_k}{\partial g_k}
  = B_k'(1-g_k) + c_k'(g_k).
\]
Subtracting the private condition gives the welfare wedge
\[
  \tau_k^{W}
  = \sum_j S'_j(N_j)\frac{\partial N_j}{\partial g_k}
  - \mu_k\frac{\partial N_k}{\partial g_k}
  + \frac{\partial \zeta_k}{\partial g_k}.
\]
By the intensive-share condition stated in the proposition, write $N_j=h_jT_R$ with
$h_j$ independent of $g_k$, so that, conditional on $q$,
\[
  \frac{\partial N_j}{\partial g_k}
  = h_j\frac{\partial T_R}{\partial g_k}
  = h_j\,\psi\gamma_k\frac{T_R}{g_k},
\]
which is strictly positive for every region $j$ with positive intensive share. If
$S'_k(N_k)\ge \mu_k$ and $\partial\zeta_k/\partial g_k\ge0$ over the feasible interval,
every term in
\[
  \tau_k^W=(S'_k-\mu_k)\frac{\partial N_k}{\partial g_k}
  +\sum_{j\ne k}S'_j\frac{\partial N_j}{\partial g_k}
  +\frac{\partial\zeta_k}{\partial g_k}
\]
is weakly nonnegative. The wedge is strictly positive under any of the strictness
conditions stated in the proposition.

For the level comparison, hold $(q,g_{-k},a)$ fixed and let $F(g_k)$ denote the
custodian's objective \eqref{eq:custodian} and $W(g_k)$ the domestic welfare
objective, so that $W'(g_k) - F'(g_k) = \tau_k^{W}(g_k)$ by the two first-order
expressions above. Because $\tau_k^W\ge0$, $W-F$ is nondecreasing, so the binary
objective family consisting of $F$ and $W$ has increasing differences in quality
and the private/social index. Standard monotone comparative statics on the compact
quality interval imply that $\arg\max W$ dominates $\arg\max F$ in the strong set
order. Hence their smallest and largest selections are weakly ordered; with unique
maximizers, $g_k^*\ge g_k^{dec}$. If $\tau_k^W>0$ throughout the interval between
unique interior optima, equality is impossible because $F'(g_k^{dec})=0$ whereas
$F'(g_k^*)=-\tau_k^W(g_k^*)<0$ whenever $W'(g_k^*)=0$.

For implementation, a transfer schedule with derivative $\tau_k^W$ at the target
adds the required marginal benefit and makes the custodian's first-order condition
coincide with the domestic welfare condition there. A per-arrival revenue-sharing
increment $\Delta\mu_k$ instead contributes
$\Delta\mu_k\,\partial N_k/\partial g_k$, so it must satisfy the conversion in the
proposition. If each subsidized objective is single-peaked, the target is a best
response for every custodian and hence an equilibrium. The claim is first-order
implementation at the target, not global uniqueness.

For finite $\eta < 0$ every gate share is strictly positive and $h_k>0$, so
$\partial N_k/\partial g_k = h_k\,\psi\gamma_k T_R/g_k > 0$ for every custodian.
Among otherwise identical custodians -- same functional forms -- the objective
$\pi_k$ then has increasing differences in $(g_k, \mu_k)$, since
$\partial^2 \pi_k/\partial g_k\,\partial \mu_k = \partial N_k/\partial g_k > 0$,
and decreasing differences in $g_k$ and any upward shift of $B_k'$ or $c_k'$; by
monotone comparative statics the quality argmax correspondence is nondecreasing in
$\mu_k$ and nonincreasing under such shifts. On the compact normalized quality
domain, any convergent selections preserve these weak orders as $\eta\to-\infty$.
The limiting tourism index depends on $\min_k g_k$
(Proposition~\ref{prop:weakerlink}) and is bound by the lowest selected quality. This
is a ceteris-paribus statement, not a global ranking of arbitrary equilibria.
\end{proof}

\end{document}